\documentclass[lettersize,journal]{IEEEtran}
 \usepackage{amsmath,amssymb}
 \usepackage{subfigure}
 \usepackage{graphicx,graphics,color,psfrag}
 \usepackage{cite,balance}
 \usepackage{algorithm}
 \usepackage{accents}
 \usepackage{amsthm}
 \usepackage{bm}
 \usepackage{url}
 \usepackage{algorithmic}
 \usepackage[english]{babel}
 \usepackage{multirow}
 \usepackage{enumerate}
 \usepackage{cases}
 \usepackage{stfloats}
 \usepackage{dsfont}
 \usepackage{color,soul}
 \usepackage{amsfonts}
  \usepackage{tcolorbox}
\usepackage[utf8]{inputenc}
 \usepackage{cite,graphicx,amsmath,amssymb}
 \usepackage{subfigure}
 \usepackage{fancyhdr}
 \usepackage{hhline}
 \usepackage{graphicx,graphics}
 \usepackage{array,color}
 \usepackage{amsmath}
 \usepackage{amsthm}

\newtheorem{definition}{Definition}
\newtheorem{proposition}{Proposition}
\newtheorem{remark}{Remark}

\newtheorem{lemma}{Lemma}

\begin{document}
\include{header}

\title{Pruning-Aware Multi-Cluster Co-Inference for Large AI Models in AI-RANs}
\author{Xiaowen Cao, Zhonghao Lyu, Shicheng Chu, Zezhong Zhang, Dingzhu Wen,  Guangxu Zhu, \\ Kaibin Huang, {\it Fellow, IEEE,}  Shuguang Cui, {\it Fellow, IEEE,} and Jie Xu, {\it Fellow, IEEE}
\thanks{X. Cao and S. Chu are with the College of Electronic and Information Engineering, Shenzhen University, Shenzhen  518100, and with Guangdong Provincial Key Laboratory of Future Networks of Intelligence, Shenzhen  518172, China (e-mail: caoxwen@szu.edu.cn, 2400042041@mails.szu.edu.cn). }
\thanks{Z. Lyu is with the Department of Computer Science, The Hang Seng University of Hong Kong, Hong Kong SAR, China (e-mail: lyuz@hsu.edu.hk). }
 \thanks{Z. Zhang, S. Cui, and J. Xu are with the School of Science and Engineering, the Shenzhen Future Network of Intelligence Institute (FNii-Shenzhen), and the Guangdong Provincial Key Laboratory of Future Networks of Intelligence, The Chinese University of Hong Kong (Shenzhen), Shenzhen, Guangdong 518172, China (e-mail: \{zhangzezhong,shuguangcui,xujie\}@cuhk.edu.cn).}
 \thanks{D. Wen is with the School of Information Science and Technology, ShanghaiTech University, Shanghai, China (e-mail: wendzh@shanghaitech.edu.cn).}
 \thanks{G. Zhu is with Shenzhen Research Institute of Big Data, Shenzhen, Guangdong 518172, China (e-mail: gxzhu@sribd.cn).}
 \thanks{K. Huang is with the Department of Electrical and Computer Engineering, The University of Hong Kong, Hong Kong SAR, China (e-mail: huangkb@eee.hku.hk).}
 
\thanks{Part of this work was presented in IEEE International Conference on Communications (ICC) 2026 at Glasgow, Scotland, UK.}
}
\maketitle

\begin{abstract}
The increasing scale and computational demands of large artificial intelligence models (LAIMs) present significant challenges for efficient inference in resource-constrained distributed environments. In this paper, we propose a multi-cluster LAIM co-inference framework, where an edge server equipped with multiple graphics processing units (GPUs) coordinates multiple user clusters to execute inference tasks collaboratively. Within each cluster, devices capture data from diverse perspectives and employ lightweight on-device LAIMs to extract local features. These features are then transmitted to the edge server, where they are aggregated and fused to generate a more accurate inference outcome. To reveal the fundamental trade-off between model pruning and collaborative inference performance, we develop a theoretical framework that characterizes the impact of pruning ratios and device contributions using rate-distortion theory and partial information decomposition. Based on this analysis, we formulate a joint optimization problem that determines the model pruning ratio, the task scheduling strategy, the bandwidth allocation, and the transmission power, with the goal of minimizing the inference distortion while satisfying the constraints of latency, energy consumption, and server capacity. Extensive simulation results demonstrate that the proposed framework significantly outperforms existing benchmark schemes, achieving superior inference accuracy and resource efficiency in multi-cluster edge intelligence networks.
\end{abstract}
\begin{IEEEkeywords}
large AI models, co-inference, edge networks, model pruning, GPU scheduling, resource allocation
\end{IEEEkeywords}
\vspace{-15pt}

\section{Introduction}

As wireless networks evolve toward artificial intelligence radio access networks (AI-RANs) that integrate edge intelligence into network operations \cite{SalmiTNSM_26,Ge2023Enging,Cao26TMC}, large artificial intelligence models (LAIMs), including large language models (LLMs), vision-language models (VLMs), and multimodal foundation models \cite{Zhang24PAMI}, are gradually moving from centralized clouds to network-edge environments. By providing advanced perception, reasoning, and decision-making capabilities, these models enable edge perception networks to process and understand complex environmental information in real time \cite{DongComMAg_LAMBO,lyu2026sense}. As a result, they are becoming an important foundation for emerging applications such as autonomous driving, low-altitude wireless networks, and industrial Internet of Things (IIoT) systems, where timely and intelligent responses are essential \cite{Xu2025Enging}.

Despite their strong performance, the massive size and high computational requirements of LAIMs pose significant challenges for edge deployment \cite{QuCOMST25_LAIM,du2025closing}. Most commercial models are therefore hosted in cloud data centers, which provide sufficient computing resources but suffer from privacy concerns, high communication overhead, and increased latency due to frequent data transmission. To address these limitations, on-device deployment of LAIMs has attracted growing interest. Recent lightweight models, such as Gemini Nano and Llama 2, demonstrate that certain LAIM inference tasks can be executed locally \cite{team2023gemini,touvron2023llama}. Nevertheless, modern foundation models still require substantial memory, computation, and energy resources that exceed the capabilities of most edge devices \cite{Lyu2025JSAC}. Consequently, fully operating large-scale LAIMs on edge devices remains impractical, motivating distributed inference across edge devices and servers.

To this end, \emph{device-edge co-inference} has emerged as a promising paradigm, where a pre-trained model is partitioned between edge devices and servers for joint inference \cite{Li2024TWC_Inference}. By performing lightweight processing locally and offloading subsequent computation to edge/cloud servers, co-inference alleviates on-device resource constraints while enabling access to LAIM services \cite{HeTMC24}. However, practical edge networks often consist of multiple user clusters that simultaneously compete for shared wireless spectrum and graphics processing unit (GPU) resources \cite{Wu25TPD}.  
In such settings, communication, computation, and task scheduling become tightly coupled, and decisions made for one cluster may directly affect the performance of others. Therefore, efficient LAIM deployment requires a unified framework that jointly optimizes task scheduling, resource allocation, and computation management across clusters while satisfying latency and energy constraints.

\subsection{Related Work and Motivation}


Since co-inference relies on feature exchange between edge devices and servers, how to reduce communication overhead is a key challenge. Semantic communications address this issue by transmitting task-relevant features instead of raw data, where deep neural networks (DNNs) are typically used to extract compact semantic representations \cite{10388062,Tilahun24IEEE}. Building on this idea, recent studies have shifted from channel-aware resource allocation to task-oriented designs that account for the importance of transmitted information. 
For instance, \cite{Liu24TCCN} jointly optimized semantic compression and communication resource allocation under task-specific requirements, while \cite{Hu24IoT} considered the joint design of transmit power and bandwidth allocation for semantic text transmission in relay-assisted systems.

Another important research direction focuses on the joint optimization of task offloading and resource allocation. In particular, \cite{Yan22TWC} jointly optimized model placement and model splitting decisions to balance local computation and communication costs, while \cite{Wen23TWC_Inference} investigated the joint allocation of sensing, communication, and computation resources for multi-device inference. Despite these advances, most existing works primarily optimize communication and computation efficiency. In co-inference systems, however, the final inference performance is also influenced by feature quality and the complementary information contributed by different devices. Consequently, the interaction among model compression, resource allocation, and multi-device information fusion remains insufficiently understood.

Nevertheless, most existing studies on co-inference and resource optimization mainly focused on small- or medium-scale AI models, which makes them difficult to directly extend to LAIMs due to their billion-level parameters and high computational cost \cite{Zhang25TWC}. To address the challenges brought by large model size, recent work has explored joint task offloading and resource allocation for LAIM inference. For example, \cite{chen2024LLM} studied how different model splitting points affect end-to-end latency and accuracy under varying wireless channel conditions, showing that the placement of partition layers has a strong impact on system performance. Similarly, \cite{Zhang25IoTJ_LLM} proposed EdgeShard, which partitions a large model into multiple smaller shards that can be deployed across distributed devices to better utilize edge resources, while the authors in \cite{XU26TMC} proposed a large- and small-model  collaborative inference framework that jointly optimizes task offloading and model caching.
These studies highlight the importance of carefully coordinating model partitioning with network conditions in LAIM deployment.


Different from them, model compression is another critical design aspect for LAIM inference. Existing techniques mainly include pruning, quantization, and knowledge distillation, which aim to reduce model size and computational complexity while preserving inference performance. 
Focusing on this, \cite{Lyu2025JSAC} theoretically characterized the relationship between the model pruning and inference distortion, providing a guideline for designing pruning-aware LAIM co-inference systems, which was further explored into model quantization based framework \cite{lyu2026quan}.
The authors in \cite{You26TMC} proposed a joint prompt compression and power control framework for mobile LLM services, where an edge-deployed small language model (SLM) compresses user prompts and a deep reinforcement learning-based policy is adopted to  optimize the compression ratio and transmission power.
However, these designs are developed primarily for single-user co-inference scenarios and therefore do not readily scale to multi-user edge network environments, motivating the present work.

\begin{figure*}[h]
  \centering
  \includegraphics[width=0.9\textwidth]{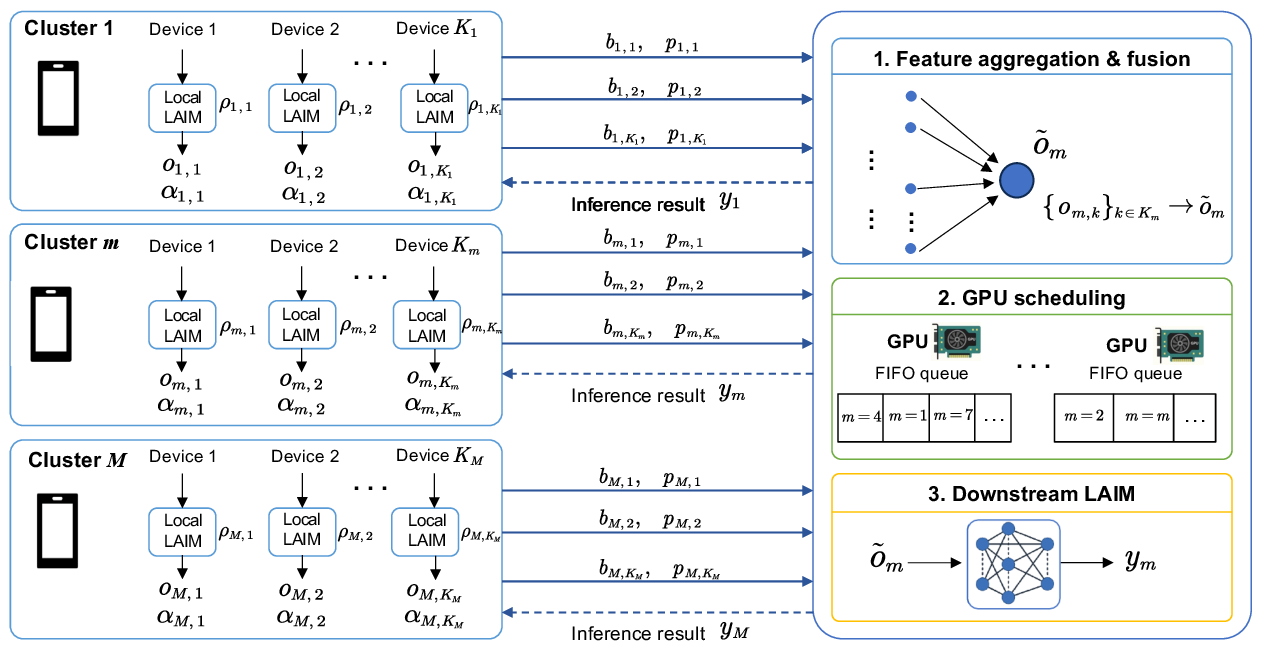}
  \caption{Illustration of multi-cluster LAIM co-inference system.}
  \label{Fig:structure}
\end{figure*}

\subsection{Contributions}

There remain several challenges in enabling efficient LAIM co-inference in edge perception networks. First, multiple user clusters may simultaneously request inference services while sharing limited wireless and edge computing resources, leading to strong resource competition. Second, model pruning can reduce computation and communication costs but may also degrade inference performance, which leaves a fundamental trade-off between resource efficiency and inference quality. Third, different devices often contribute unequally to the final inference outcome due to variations in sensing quality, observation perspectives, and information overlap. Therefore, efficient LAIM co-inference requires the joint consideration of resource allocation, model compression, and device contribution in a unified design framework.
To address these challenges, we propose a multi-cluster LAIM co-inference framework, where an edge server equipped with multiple GPUs coordinates multiple user clusters to perform collaborative inference. The main contributions of this work are summarized as follows.

\begin{itemize}
    \item \textbf{Multi-cluster LAIM co-inference framework}: We consider a novel co-inference architecture in AI-RANs with multiple device clusters. Unlike conventional single-user or single-cluster designs, the proposed framework explicitly captures the competition for shared wireless spectrum and GPU resources among clusters. By incorporating both energy consumption and inference latency into the system design, we characterize the coupling among communication, computation, and task scheduling, thereby providing a practical framework for scalable and resource-efficient LAIM deployment.
    
 \item \textbf{Pruning-performance analysis with heterogeneous device contributions}: To reveal the impact of model compression on collaborative inference, we establish a rate-distortion-based analytical framework that characterizes the fundamental trade-off between model pruning and inference performance. We further incorporate partial information decomposition (PID) to quantify the contributions of different devices. Such analysis provides new insights into how device heterogeneity and multi-view information fusion affect pruning-aware co-inference.

 \item \textbf{Joint optimization for resource-efficient co-inference}: Guided by the proposed pruning-performance analysis, we formulate a joint inference distortion minimization problem that optimizes model pruning ratios, task scheduling, bandwidth allocation, and transmission power under latency and energy constraints. The problem is highly non-convex because model compression decisions directly affect both communication and computation processes, leading to strong interactions among optimization variables. To handle this coupling, we develop an efficient algorithm that enables practical deployment in large-scale edge perception networks.
 
 \item \textbf{Performance evaluation}:
To demonstrate the effectiveness of the proposed framework, we consider two representative edge vision understanding tasks. The results show that the developed pruning-performance analysis efficiently characterizes the inference distortion, while the proposed joint optimization significantly outperforms benchmark schemes. Moreover, by accounting for heterogeneous device contributions, the proposed design effectively improves resource utilization in multi-cluster edge environments.
\end{itemize}

The remainder of this paper is organized as follows. Section II presents the system model. Section III develops the pruning-aware multi-device co-inference quality analysis. Section IV formulates the joint pruning, task scheduling, and communication design. Section V provides the numerical results. Finally, Section VI concludes the paper.

\section{System Model}
We consider LAIM co-inference over a multi-cluster edge system, where a set of user clusters $\mathcal{M}\triangleq\{1,\dots,M\}$ is served by an edge server equipped with $G$ GPUs indexed by $\mathcal{G}\triangleq\{1,\dots,G\}$, as shown in Fig. \ref{Fig:structure}. Let $K_m$ denote the number of edge devices in cluster $m\in\mathcal{M}$.
Cluster $m \in \mathcal{M}$ consists of a set of edge devices $\mathcal{K}_m \triangleq \{1,\dots,K_m\}$ that collaborate to perform a shared inference task. Within each cluster, devices observe the same target or environment from different viewpoints using heterogeneous sensors, enabling multi-view cooperative sensing. This design helps mitigate common limitations of single-device perception, such as occlusions and incomplete local observations, thereby producing more comprehensive and robust feature representations. Meanwhile, clusters are assumed to be task-isolated, with each cluster dedicated to a distinct inference objective or sensing scenario. This assumption eliminates inter-cluster task coupling, prevents cross-task interference, and enables parallel processing of multiple inference tasks on a shared edge server.
Upon receiving a request in cluster $m\in\mathcal{M}$, every device $k\in\mathcal{K}_m$ runs a local LAIM that adapts to limited device computation, to extract embeddings from its raw data and uploads them to the server. 
After gathering all multi-view embeddings from cluster $m$, the server performs intra-cluster feature aggregation and fusion to correct single-view sensing deviations and produce a unified global feature representation, which is then assigned to a single dedicated GPU $g\in\mathcal{G}$ for final inference.
The complete inference workflow of the proposed multi-cluster co-inference framework is described as follows.

\subsection{On-device Inference and Embedding Upload}
First, we discuss pruning-aware on-device inference during forward propagation. Denote the on-device model on edge device $k$ in cluster $m$ as $\mathcal{E}_{m,k}(\cdot;\bm{\theta}_{m,k})$ with parameter vector $\bm{\theta}_{m,k}\in\mathbb{R}^{q_{m,k}}$ with dimension $q_{m,k}$, $\forall k\in\mathcal{K}_m, \forall m\in\mathcal {M}$. To accommodate limited compute/storage, we prune $\bm{\theta}_{m,k}$ into a lightweight model $\widehat{\bm{\theta}}_{m,k}\in\mathbb{R}^{\widehat q_{m,k}}$ with dimension $\widehat q_{m,k}$.
We further define pruning ratio as 
\begin{align}
    \rho_{m,k}\triangleq \widehat q_{m,k}/q_{m,k}\in[0,1], \forall k\in\mathcal{K}_m, \forall m\in\mathcal {M}.
\end{align}
Then the extracted feature of raw input $\bm{x}_{m,k}$ is $\bm{o}_{m,k}=\mathcal{E}_{m,k}(\bm{x}_{m,k};\widehat{\bm{\theta}}_{m,k})$, $\forall k\in\mathcal{K}_m, \forall m\in\mathcal {M}$, where $\mathcal{E}_{m,k}(\cdot)$ denotes the extraction function of the local model at edge device $k\in\mathcal{K}_m, \forall m\in\mathcal{M}$.

Next, we discuss the feature uploading process. 
For ease of illustration, we adopt the frequency division multiple access (FDMA) scheme for wireless data transmission. In this setting, wireless channel conditions stay stable during each data upload process. Each edge device can accurately obtain its own channel state information (CSI), so it can pre-compensate signal phases during signal transmission.
For each device $k\in\mathcal{K}_m$ in cluster $m\in\mathcal{M}$, its uplink transmission of feature data $\bm{o}_{m,k}$ and downlink reception of inference results use an independent and non-overlapping frequency band, with a fixed bandwidth denoted as $b_{m,k}$.
We thus have the bandwidth allocation constraint as
\begin{align}  \sum\limits_{m\in\mathcal{M}}\sum\limits_{k\in\mathcal{K}_m}b_{m,k}\leq B.\label{Sys_Banwidth}
\end{align}
Then the achievable rate of uplink transmission is
\begin{align}\label{eq:uplink-rate}
R_{m,k}(p_{m,k},b_{m,k}) = b_{m,k}\log_2\!\left(1+\frac{g_{m,k}p_{m,k}}{b_{m,k}N_0}\right),
\end{align}
where $g_{m,k}$ is the uplink channel power gain, $N_0$ is the power spectral density (PSD) of the additive white Gaussian noise (AWGN), and $p_{m,k}$ denotes the transmit power with maximum power being $P^{\max}_{m,k}$ given by 
\begin{align}\label{maxpower}
  0\le p_{m,k}\le P^{\max}_{m,k}, \forall k\in\mathcal{K}_m, m\in\mathcal{M}.
\end{align}

\subsection{On-server Aggregation and GPU Scheduling}

Then, we introduce on-server feature aggregation and GPU scheduling. After collecting embeddings $\{\bm{o}_{m,k}\}_{k\in\mathcal{K}_m}$ from all edge devices in cluster $m\in\mathcal{M}$, the server aggregates them as $\widetilde{\bm{o}}_m=\widetilde f_m(\bm{o}_{m,1},\dots,\bm{o}_{m,K_m})$, where ${\tilde f}_{m}(\cdot)$ denotes the aggregation function. 
The aggregated feature $\widetilde{\bm{o}}_m$ is then assigned to one of the available GPUs for on-server inference. Each GPU $g \in \mathcal{G}$ maintains a first-in-first-out (FIFO) processing queue with capacity $\mathcal{Q}_g \triangleq \{1, \dots, Q_g\}$ to manage incoming inference tasks. To model task scheduling, we define a binary variable $x_{m,g,q} \in \{0,1\}$, where $x_{m,g,q} = 1$ indicates that the task of cluster $m$ is assigned to the $q$-th position in the queue of GPU $g$.
Each inference task is scheduled to exactly one position in one GPU queue, and each queue position can accommodate at most one task. This ensures exclusive service allocation and enables a structured representation of GPU workload management for multi-cluster co-inference.
Therefore, we have 
\begin{align}
\sum\limits_{g \in \cal G}\sum\limits_{q \in {\cal Q}_g} x_{m,g,q}=1,\ \forall m\in\mathcal{M}, \label{Sys_Conect1}\\
\sum\limits_{m \in \cal M} x_{m,g,q}\le 1, \forall  q\in \mathcal{Q}_g, g\in\mathcal{G}.\label{Sys_Conect2}
\end{align}

Finally, we discuss the on-server inference process.
Once the task of cluster $m$ is scheduled to a specific GPU, the server executes the remaining inference procedure using the aggregated feature representation $\widetilde{\bm{o}}_m$. Specifically, the GPU processes $\widetilde{\bm{o}}_m$ through a downstream LAIM, and the final inference output is obtained as
\begin{align}
    \bm{y}_m=\widehat f(\widetilde{\bm{o}}_m;\bm{\xi}_m), 
\end{align}
where $\widehat f(\cdot;\bm{\xi}_m)$ denotes the downstream LAIM with parameter $\bm{\xi}_m$. 
The resulting prediction $\bm{y}_m$ is then broadcast from the server to all devices within $\mathcal{K}_m$. This feedback enables each device to obtain a consistent inference result for the shared task, which is particularly important in multi-view cooperative sensing scenarios where devices collaboratively observe the same target or environment. 

Next, we analyze the system delay and energy consumption under the proposed multi-cluster LAIM co-inference framework.

\subsection{System Delay}
The overall system delay consists of three components: on-device inference delay, feature uploading delay,  and on-server inference delay.

\begin{itemize}
    \item  \textbf{On-device inference delay}: We first analyze the arrival time of task $m$, which depends on the on-device inference and transmission delay of $k \in {\cal K}_m$. Denote the original workload of on-device inference (in FLOPs) as $N^{\rm FLOP}_{m,k}$. Assume that the remaining workload is proportional to the pruning ratio $\rho_{m,k}$, and then the remaining workload after model pruning is $\rho_{m,k} N^{\rm FLOP}_{m,k}$. Let $c_{m,k}$ and $f_{m,k}$ denote the number of FLOPs per cycle of the processor and the clock frequency at device $k\in {\cal K}_m$, respectively. Then, the delay for on-device inference is\footnote{Note that the proportional latency and energy model is an analytical approximation rather than a universal hardware law. It captures the basic relationship between the pruning ratio and computational overhead, while providing a tractable abstraction for subsequent system-level optimization. It is particularly reasonable when pruning physically removes computational units and the resulting model structure can be effectively exploited by the underlying execution engine.}
\begin{align}\label{local_inference_delay}
	t_{m,k}^{\rm I}(\rho_{m,k}) = \frac{\rho_{m,k} N^{\rm FLOP}_{m,k}}{f_{m,k} c_{m,k}}.
\end{align}

\item \textbf{Feature uploading delay}: We assume that all features have the same communication overhead of $\beta$ bits. Based on the uplink communication rate in \eqref{eq:uplink-rate}, the delay for embedding $\bm{ o}_{m,k}$ uploading from device $k\in\mathcal{K}_m$ to the server is
\begin{align}\label{communication_delay}
	t^{\rm U}_{m,k}(p_{m,k},b_{m,k}) = \frac{\beta}{R_{m,k}(p_{m,k},b_{m,k})}.
\end{align}
Then the task arrival time of device $k$ is 
\begin{align}
	t^{\rm A}_{m,k}(\rho_{m,k}, p_{m,k},b_{m,k}) = t_{m,k}^{\rm I}(\rho_{m,k})+ t^{\rm U}_{m,k}(p_{m,k},b_{m,k}).
\end{align}
Accordingly, the ready time of task $m$ is
\begin{align}\label{tRm}
t^{\rm R}_m(\{\rho_{m,k},p_{m,k},b_{m,k}\})
\triangleq \max\limits_{k\in\mathcal{K}_m} ~t^{\mathrm{A}}_{m,k}.
\end{align}

\item  \textbf{On-server inference delay}: Then, we consider the on-server inference delay, which includes the queueing and on-server computation delay. Specifically, denote the arrival time at position $q\in \mathcal{Q}_g$ on GPU $g\in\mathcal{G}$ as 
\begin{align}
&A_{g,q}(\{x_{m,g,q},\rho_{m,k},p_{m,k},b_{m,k}\})\!=\notag\\
&~~~~~~~~~~~~~~~~~\sum_{m=1}^{M} x_{m,g,q} t^{\rm R}_m (\{\rho_{m,k},p_{m,k},b_{m,k}\} ).
\end{align}
To enforce FIFO processing, we have
\begin{align}
&A_{g,q}(\{x_{m,g,q},\rho_{m,k}, p_{m,k},b_{m,k}\})\le \nonumber \\
&A_{g,q+1}(\{x_{m,g,q+1},\rho_{m,k},p_{m,k},b_{m,k}\}), \forall  q\in \mathcal{Q}_g, g\in\mathcal{G}.\label{Sys_FIFO}
\end{align} 
Next, let $T_{g,q}$ denote the completion time at position $q\in \mathcal{Q}_g$ on GPU $g\in\mathcal{G}$.
Specifically, denote the on-server workload of task $m$, the number of FLOPs per cycle, and clock frequency at GPU $g$ as ${\hat N}^{\rm FLOP}_m$, $c_g$ and $f_g$, respectively. Let ${\hat t}_m^{\rm I}$ denote the on-server computation delay for task $m$, which is given by 
\begin{align}
    {\hat t}_m^{\rm I}= {\hat N}^{\rm FLOP}_m / {f_g c_g}, \forall m\in\mathcal{M}.
\end{align}
Under the FIFO scheduling discipline, the task at queue position $q$ can start execution only after both the completion of the task at position $(q-1)$ and the arrival of the current task. Therefore, we have
\begin{align} \label{Tgp}
T_{g,q}=& \max \{T_{g,q-1},A_{g,q}(\{x_{m,g,q},\rho_{m,k},p_{m,k},b_{m,k}\})\} \nonumber \\
&+ \sum\nolimits_{m \in \cal M} x_{m,g,q}{\hat t}_m^{\rm I},\forall  q\in \mathcal{Q}_g, g\in\mathcal{G}.
\end{align}

\end{itemize}

\subsection{Energy Consumption}
Similar to the delay analysis, the overall energy consumption can be decomposed into three components: on-device inference energy consumption, feature uploading energy consumption, and on-server inference energy consumption.

\begin{itemize}
    \item  \textbf{On-device inference energy consumption}:
We first consider the energy consumed during on-device inference. Specifically, each edge device processes its local sensing data using the deployed lightweight LAIM with pruning ratio $\rho_{m,k}$ to extract intermediate feature embeddings. 
Let $e^{\mathrm{I}}(\{\rho_{m,k}\})$ denote the on-device inference energy consumed by device $k \in {\cal K}_m, \forall  m\in\mathcal{M}$.
The corresponding energy consumption depends on the required computational workload and the device computing capability, and can be expressed as follows
\begin{align}
e^{\mathrm{I}}(\{\rho_{m,k}\})
=\sum_{m=1}^M\sum_{k\in\mathcal{K}_m}
\eta_{m,k}\,\phi_{m,k}\,f_{m,k}^2\,
\frac{\rho_{m,k} N_{m,k}^{\mathrm{FLOP}}}{c_{m,k}},
\end{align}
where $\eta_{m,k}$ and $\phi_{m,k}$ represent the power usage effectiveness (PUE) and chip-specific power coefficient at device $k \in {\cal K}_m, \forall  m\in\mathcal{M}$, respectively.

    \item  \textbf{Feature uploading energy consumption}:
Next, we consider the energy consumed during feature uploading. After completing on-device inference, each device transmits its extracted feature embeddings to the edge server through wireless links. The corresponding transmission energy denoted by $e^{\mathrm{U}}(\{p_{m,k},b_{m,k}\})$ depends on the transmit power $p_{m,k}$ and allocated communication resources $b_{m,k}$. Therefore, the total feature uploading energy consumption of all devices can be expressed as follows.
\begin{align} \label{powerforuploading}
e^{\mathrm{U}}(\{p_{m,k},b_{m,k}\})
=\sum\limits_{m\in\mathcal{M}}\sum\limits_{k\in\mathcal{K}_m}
\frac{p_{m,k}\,\beta}{R_{m,k}(p_{m,k},b_{m,k})}.
\end{align}

    \item  \textbf{On-server inference energy consumption}:
Then, the overall energy consumption of the server consists of a base power term and a computation power term.
Let $P^{\rm b}$ denote the base power of the server, which accounts for auxiliary components such as lighting and displays. Accordingly, the total server energy consumption is given by
\begin{align}
&\widehat e^{\mathrm{I}}(\{x_{m,g,q},T_{g,q}\})=
\nonumber \\
& P^{\mathrm{b}} \max_{g,q} \{T_{g,q}\}
+ \sum_{m=1}^M\sum_{g=1}^G\sum_{q=1}^{Q_g}\eta_g 
x_{m,g,q}{\phi}_g\,f_g^2\,
\frac{\widehat N_m^{\mathrm{FLOP}}}{c_g},
\end{align}
where  $\eta_g$ and $ \phi_g$ are the PUE and power coefficient of GPU $g$, respectively.

\end{itemize}
Finally, the total energy consumption during the co-inference process is 
\begin{align}
&{\hat E}(\{x_{m,g,q},\rho_{m,k},p_{m,k},b_{m,k},T_{g,q}\})= \nonumber \\
&e^{\mathrm{I}}(\{\rho_{m,k}\})+ e^{\mathrm{U}}(\{p_{m,k},b_{m,k}\})+\widehat e^{\mathrm{I}}(\{x_{m,g,q},T_{g,q}\}).
\end{align}


 \section{Pruning-aware Multi-device Co-inference Quality Analysis}
We next characterize the impact of model pruning on the inference performance of multi-device LAIM co-inference systems. In multi-view perception scenarios, devices often contribute different amounts of useful information due to variations in sensing viewpoints and observation quality. Moreover, the information provided by a device may be unique, redundant with other devices, or only valuable when combined with observations from other views. As a result, the effect of pruning on the overall inference performance depends not only on the compressed model itself but also on the contribution of the corresponding device. To capture this interplay, we first present a rate-distortion relationship for the single-device case and then extend the analysis to the multi-device setting by incorporating heterogeneous device contributions through a PID framework. This provides a theoretical basis for contribution-aware pruning and resource allocation in co-inference systems.

\subsubsection{Rate-distortion Analysis in Single-device Case}
Directly analyzing the output distortion of an LAIM is challenging due to its strong nonlinearity. To overcome this difficulty, we first upper-bound the output distortion by the parameter distortion, as stated in the following lemma.
\begin{lemma}\label{lemma:prune2param}\emph{
Denote $\bm{W}$ and $\widehat{\bm{W}}$ as the original and pruned parameter vectors of an LAIM $f(\cdot)$, respectively. Under the smoothness of the activation function in $f(\cdot)$, for input $\bm{x}$, we have 
\begin{align}\label{eq:prune2param}
\big\|f(\bm{x},\bm{W})-f(\bm{x},\widehat{\bm{W}})\big\|
\le C \big\|\bm{W}-\widehat{\bm{W}}\big\|_{\mathrm{F}},
\end{align}
where $C$ is a constant relative to LAIM parameters \cite{Lyu2025JSAC}.}
\end{lemma}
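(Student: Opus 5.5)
To prove Lemma~\ref{lemma:prune2param}, I will model the LAIM $f(\cdot)$ as a composition of $L$ layers, $\bm{h}_0=\bm{x}$ and $\bm{h}_l=\sigma_l(\bm{W}_l\bm{h}_{l-1})$ for $l=1,\dots,L$, with $f(\bm{x},\bm{W})=\bm{h}_L$. Here $\bm{W}=\{\bm{W}_l\}$ collects all layer weights, and attention and normalization blocks are treated as layers with locally Lipschitz maps. The smoothness assumption is used in the form that every $\sigma_l$ is $L_\sigma$-Lipschitz with $\sigma_l(\bm{0})=\bm{0}$. I also assume bounded weights, $\|\bm{W}_l\|_2\le s$ and $\|\widehat{\bm{W}}_l\|_2\le s$. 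This holds automatically for the pruned weights, since pruning only zeros out entries and so cannot increase the Frobenius norm. Finally, I assume a bounded input, $\|\bm{x}\|\le X$.

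The argument is a layer-wise telescoping (perturbation) bound. Write $\widehat{\bm{h}}_l$ for the activations of the pruned model and $\bm{\delta}_l=\|\bm{h}_l-\widehat{\bm{h}}_l\|$. First, the forward activations are bounded: $\|\widehat{\bm{h}}_{l}\|\le (L_\sigma s)^{l}X$. Second, by the Lipschitz property,
\begin{align*}
\bm{\delta}_l\le L_\sigma\big(\|\bm{W}_l-\widehat{\bm{W}}_l\|_2\|\widehat{\bm{h}}_{l-1}\|+\|\bm{W}_l\|_2\bm{\delta}_{l-1}\big),
\end{align*}
which follows from adding and subtracting $\bm{W}_l\widehat{\bm{h}}_{l-1}$. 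Unrolling this recursion from $\bm{\delta}_0=0$ gives
\begin{align*}
\bm{\delta}_L\le \sum_{l=1}^{L}(L_\sigma s)^{L}s^{-1}X\,\|\bm{W}_l-\widehat{\bm{W}}_l\|_2 .
\end{align*}
Bounding the spectral norm by the Frobenius norm, and then using Cauchy--Schwarz as $\sum_l\|\cdot\|_{\mathrm{F}}\le\sqrt{L}\,\|\bm{W}-\widehat{\bm{W}}\|_{\mathrm{F}}$, yields the claim with
\begin{align*}
C=\sqrt{L}\,L_\sigma^{L}s^{L-1}X .
\end{align*}
This constant depends only on the depth, the weight norms and the activation smoothness, consistent with being a ``constant relative to LAIM parameters.''

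The main obstacle is justifying uniform Lipschitz constants for LAIM-specific blocks. Softmax attention is only locally Lipschitz, and its constant grows with $\|\bm{W}_Q\|\|\bm{W}_K\|$ and with the input norm. Layer normalization is Lipschitz only when the variance is bounded away from zero. My plan for these blocks is to restrict to a compact input domain and bounded weights, so that every block is Lipschitz jointly in its input and its weights on that set, and to absorb the resulting constants into $L_\sigma$. Residual connections only change $L_\sigma s$ to $1+L_\sigma s$ and do not alter the argument. If needed, I would defer the precise constants to \cite{Lyu2025JSAC}.
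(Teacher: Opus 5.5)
The paper gives no proof of Lemma~\ref{lemma:prune2param}. It imports the result from \cite{Lyu2025JSAC}, and its hypotheses are compressed into ``smoothness of the activation function'' plus an unspecified constant $C$. Your layer-wise telescoping argument is therefore a self-contained alternative, and it is correct. The forward bound $\|\widehat{\bm{h}}_l\|\le (L_\sigma s)^l X$, the add-and-subtract recursion and its unrolling all check out. Each summand carries $(L_\sigma s)^{L-l}\cdot L_\sigma\cdot(L_\sigma s)^{l-1}X=L_\sigma^{L}s^{L-1}X$, and the Cauchy--Schwarz step then gives $C=\sqrt{L}\,L_\sigma^{L}s^{L-1}X$.

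What your route buys over the citation is that it makes the hidden assumptions explicit: Lipschitz blocks, bounded weights for both models, and a bounded input. It also shows that $C$ scales with $\|\bm{x}\|$ and grows exponentially with depth. This makes clear that the lemma is a worst-case bound, and why the paper treats parameter distortion only as a practical proxy for output distortion.

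Your plan for attention and normalization works once you write each block as $\bm{h}_l=g_l(\bm{h}_{l-1};\bm{W}_l)$. Give it Lipschitz constants $a_l$ in the input and $b_l$ in the weights, on a compact set containing both trajectories; such a set exists by induction, since continuous maps send compact sets to compact sets. The recursion $\bm{\delta}_l\le a_l\bm{\delta}_{l-1}+b_l\|\bm{W}_l-\widehat{\bm{W}}_l\|_{\mathrm{F}}$ then unrolls as before. This is more accurate than ``absorbing into $L_\sigma$'', because weights such as $\bm{W}_Q\bm{W}_K^{\top}$ do not enter through a single linear map. With the usual $\epsilon>0$ in LayerNorm, the variance issue also disappears.

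One justification is wrong, though the proof survives because you also state the bound as an assumption. Zeroing entries cannot increase the Frobenius norm, but it can increase the spectral norm, and the spectral norm is what $\|\widehat{\bm{W}}_l\|_2\le s$ requires. For example, $\left(\begin{smallmatrix}1&1\\-1&1\end{smallmatrix}\right)$ has spectral norm $\sqrt{2}$. Pruning its $(2,1)$ entry gives $\left(\begin{smallmatrix}1&1\\0&1\end{smallmatrix}\right)$, whose spectral norm is $(1+\sqrt{5})/2>\sqrt{2}$.

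The fix is to let $s$ bound $\|\bm{W}_l\|_{\mathrm{F}}$. Then $\|\bm{W}_l\|_2\le s$ and $\|\widehat{\bm{W}}_l\|_2\le\|\widehat{\bm{W}}_l\|_{\mathrm{F}}\le\|\bm{W}_l\|_{\mathrm{F}}\le s$ hold automatically, and the rest of the argument is unchanged. If $\widehat{\bm{W}}$ is a quantized reconstruction rather than a zero-masked copy, as the rate-distortion view in Proposition~\ref{prop_multi-device} suggests, the bound on $\widehat{\bm{W}}_l$ simply has to be assumed.
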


\begin{remark}\emph{
Lemma~\ref{lemma:prune2param} establishes a direct relationship between parameter distortion and output distortion. Specifically, the output deviation caused by model pruning is upper bounded by the difference between the original and pruned parameter vectors. This result is important because directly analyzing the output distortion of LAIMs is difficult due to their deep and nonlinear structures. In contrast, parameter distortion can be readily quantified from the pruning process.
To this end, parameter distortion provides a practical approximation for the impact of pruning on inference performance.}
\end{remark}
\begin{figure}[t]
    \centering 
	\subfigure[ResNet-152]{\includegraphics[width=0.24\textwidth]{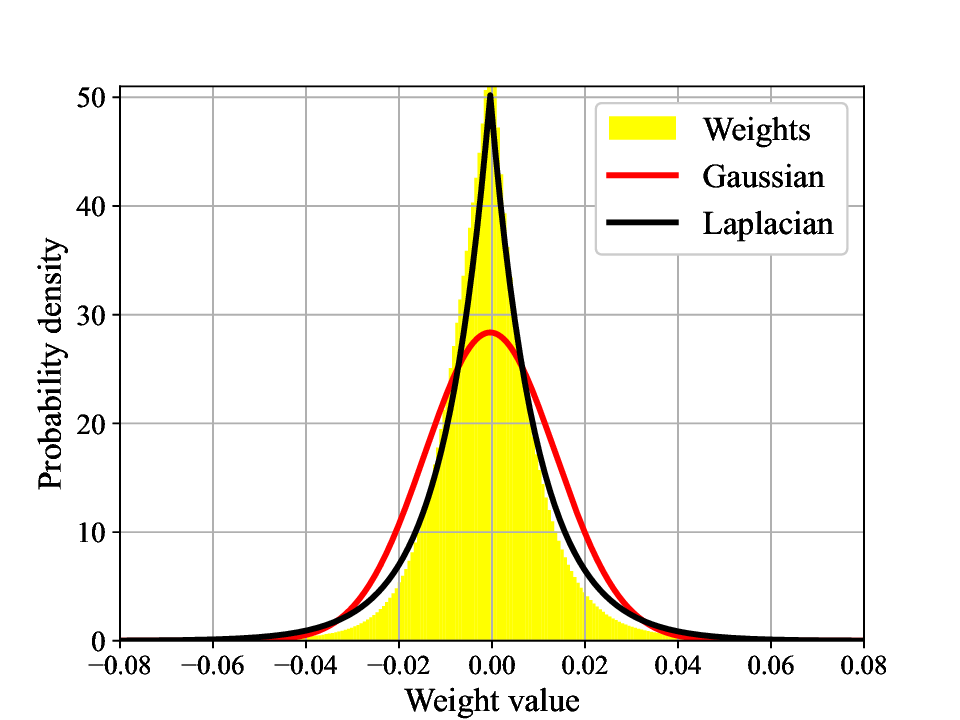}}
	\subfigure[GPT-3]{\includegraphics[width=0.24\textwidth]{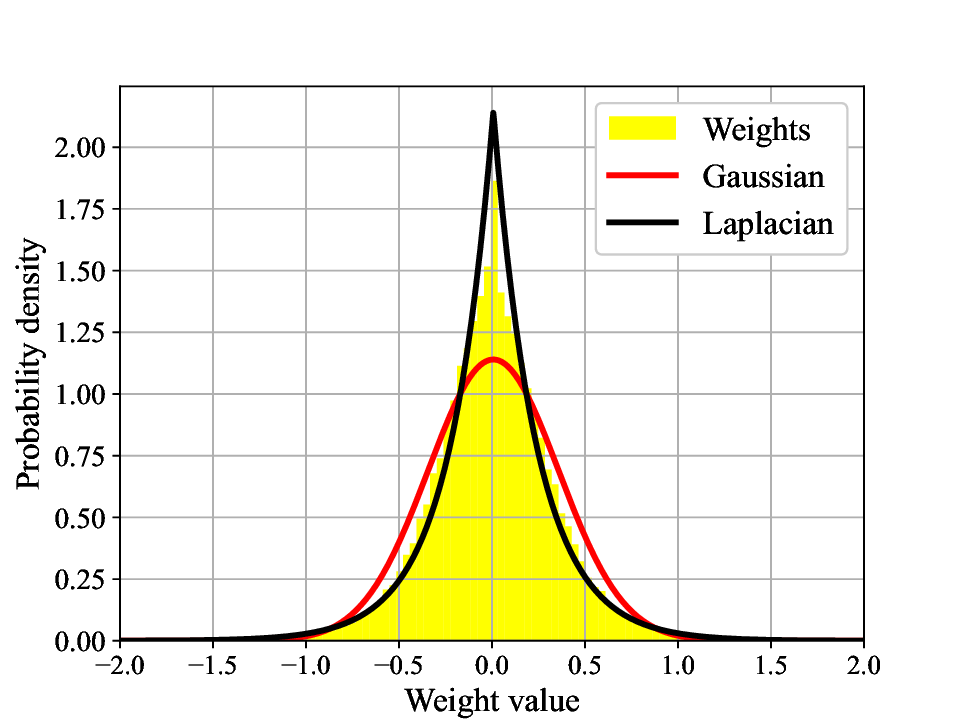}}
    \vspace{5pt}
    \caption{Laplacian distributed weights of ResNet-152 and GPT-3.}
\label{fig:density_weight}
	\vspace{10pt}
\end{figure}

Next, to analyze the relationship between the model pruning ratio and inference performance, we derive the rate-distortion function for LAIM co-inference. 
Define ${\dot R}(D)$ as the rate-distortion function, which equals the infimum of the rate $R$, such that rate-distortion pair $(R, D)$ is achievable \cite{Berger_RateD}.
Bridging this gap via Lemma~\ref{lemma:prune2param}, we instead analyze the rate-distortion function of its upper bound, i.e., the model parameter distortion, as stated in the following lemma.
\begin{lemma}\label{lemma:laplace-converse}\emph{
Assume that the LAIM parameters $\bm{W}\in\mathbb{R}^{q}$ follow a multivariate Laplacian distribution. 
Under the distortion function $\|\bm{W}-\widehat{\bm{W}}\|_{\mathrm{F}} \le D $, from \cite{Lyu2025JSAC}, a lower bound of the rate-distortion function is 
\begin{align}\label{eq:rd-lower}
{\dot R}(D) \ge h(\bm{W})
+ q\log\!\left(\frac{q}{\sqrt{\pi e D}}\right)
+ \log\!\left(\frac{\Gamma(q/2)}{2\,\Gamma(q)}\right),
\end{align}
where $h(\bm{W})$ denotes the differential entropy of $\bm{W}$.}
\end{lemma}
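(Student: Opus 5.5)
This is a Shannon-type lower bound: start from mutual information, subtract the largest entropy any error vector confined to the distortion ball can have, and evaluate that entropy with the volume of a $q$-dimensional ball.

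\emph{Step 1 (reduce to an error-entropy bound).} By definition,
\[
{\dot R}(D)=\inf I(\bm{W};\widehat{\bm{W}}),
\]
where the infimum is over test channels $p(\widehat{\bm{W}}\mid\bm{W})$ with $\|\bm{W}-\widehat{\bm{W}}\|_{\mathrm{F}}\le D$. For any such channel I will write
\[
I(\bm{W};\widehat{\bm{W}})=h(\bm{W})-h(\bm{W}\mid\widehat{\bm{W}})=h(\bm{W})-h(\bm{W}-\widehat{\bm{W}}\mid\widehat{\bm{W}})\ge h(\bm{W})-h(\bm{W}-\widehat{\bm{W}}).
\]
The second equality uses translation invariance of differential entropy, and the inequality holds because conditioning reduces entropy. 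It therefore suffices to upper-bound $h(\bm{Z})$ uniformly over random vectors $\bm{Z}=\bm{W}-\widehat{\bm{W}}\in\mathbb{R}^q$ with $\|\bm{Z}\|\le D$.

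\emph{Step 2 (maximum-entropy step).} Since $\bm{Z}$ is supported in the Euclidean ball $\mathcal{B}_q(D)$, its entropy is maximized by the uniform distribution on that ball. Hence
\[
h(\bm{Z})\le\log\mathrm{Vol}(\mathcal{B}_q(D))=\log\frac{\pi^{q/2}D^{q}}{\Gamma(q/2+1)}.
\]
This is where the Laplacian assumption is not needed. The Laplacian model only enters later, through an explicit expression for $h(\bm{W})$ that is kept symbolic here.

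\emph{Step 3 (match the stated constants).} I will simplify $-\log\mathrm{Vol}$ into the form
\[
q\log\!\big(q/\sqrt{\pi e D}\big)+\log\!\big(\Gamma(q/2)/(2\Gamma(q))\big).
\]
The tools are $\Gamma(q/2+1)=\tfrac{q}{2}\Gamma(q/2)$, the duplication formula $\Gamma(q)=2^{q-1}\pi^{-1/2}\Gamma(q/2)\Gamma((q+1)/2)$, and Stirling-type bounds such as $\Gamma(q)\le q^{q}e^{-q}\cdot(\text{poly})$, which are needed to bring in the factor $e^{-q/2}$ and the term $q^{q}$.

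\emph{Main obstacle.} The bookkeeping in Step 3 is the crux. The displayed bound has $\sqrt{D}$ rather than $D$, and it carries an explicit $e$ and a $\Gamma(q)$. This suggests that \cite{Lyu2025JSAC} actually measures distortion on a squared or $\ell_1$-type scale, or uses the volume of an $\ell_1$ ball, $2^qD^q/\Gamma(q+1)$, combined with a $\Gamma(q/2)$ normalization from the Laplacian's radial law. I would first try the Euclidean-ball route with distortion read as $\|\cdot\|^2\le D$, so the radius is $\sqrt{D}$. I would then check whether Stirling's bound $\Gamma(q)\ge\sqrt{2\pi}\,q^{q-1/2}e^{-q}$ produces exactly the $q^{q}/(\pi e)^{q/2}$ and $\Gamma(q/2)/(2\Gamma(q))$ factors.

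If it does not, I would fall back to replacing the uniform-ball maximum-entropy bound with the maximum-entropy density under a norm moment constraint $\mathbb{E}\|\bm{Z}\|\le D$. That density is $\propto e^{-\lambda\|\bm{z}\|}$, whose normalization naturally involves $\Gamma(q)/\Gamma(q/2)$ and a factor $e^{q}$. This matches the structure of the claimed constant more closely, so I expect it to be the intended route.
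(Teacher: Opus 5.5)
Your Steps~1--2 are sound, but Step~3 is left open, and you rightly call it the crux. The first repair you propose cannot work, because the displayed constant is unreachable.

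Under the lemma's own constraint $\|\bm{W}-\widehat{\bm{W}}\|_{\mathrm{F}}\le D$, replace $\bm{W}$ by $s\bm{W}$ (still Laplacian) and $D$ by $sD$. This leaves $\dot R$ unchanged, but raises the right-hand side of \eqref{eq:rd-lower} by $\frac{q}{2}\log s$. So as typeset, with $\sqrt{\pi e D}$, the bound is violated for large $s$, and no Stirling bookkeeping can produce it.

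Reading the constraint as $\|\bm{Z}\|^2\le D$ does not rescue it. Your ball bound becomes $h(\bm{W})-\frac{q}{2}\log(\pi D)+\log\Gamma(\frac{q}{2}+1)$. Dominating the displayed right-hand side would then require $q!\ge q^{q}e^{-q/2}$, which fails for every $q\ge 3$.

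The paper offers no proof of its own; it imports the bound from \cite{Lyu2025JSAC}. The factor $\Gamma(q/2)/(2\Gamma(q))$ is exactly what the Shannon lower bound produces with the maximum-entropy density $\propto e^{-\lambda\|\bm{z}\|}$, which is your fallback. The constant that computation yields is $\sqrt{\pi}\,e\,D$, not $\sqrt{\pi e D}$. Proposition~\ref{prop_multi-device} also places $D$ outside the root, consistent with a typo in the lemma.

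To close the gap, carry out that computation. Take $p^\star(\bm{z})=e^{-\lambda\|\bm{z}\|}/Z_\lambda$. Polar coordinates give $Z_\lambda=\frac{2\pi^{q/2}}{\Gamma(q/2)}\int_0^\infty r^{q-1}e^{-\lambda r}\,dr=\frac{2\pi^{q/2}\Gamma(q)}{\Gamma(q/2)\lambda^{q}}$. Set $\lambda=q/D$. By Gibbs' inequality, every $\bm{Z}$ with $\mathbb{E}\|\bm{Z}\|\le D$ satisfies $h(\bm{Z})\le\log Z_\lambda+\lambda D$; this includes any $\bm{Z}$ with $\|\bm{Z}\|\le D$ almost surely. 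Hence
\[
\dot R(D)\ge h(\bm{W})+q\log\frac{q}{\sqrt{\pi}\,e\,D}+\log\frac{\Gamma(q/2)}{2\Gamma(q)}.
\]

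Alternatively, your primary route already proves this if you keep the radius equal to $D$. Your ball bound exceeds the right-hand side above by exactly $\log\big(q!\,e^{q}/q^{q}\big)\ge 0$, which needs only $q!\ge (q/e)^{q}$ (from $e^{q}\ge q^{q}/q!$). That version is more elementary and stronger by about $\frac{1}{2}\log(2\pi q)$. However, it relies on the almost-sure constraint, whereas the exponential maximum-entropy argument also covers the usual expected-distortion formulation.
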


\begin{remark}\emph{
Lemma~\ref{lemma:laplace-converse} characterizes a lower bound on the rate-distortion function for the single-device case. The result establishes a fundamental relationship between the coding rate and parameter distortion, thereby providing a tractable framework for analyzing the impact of model pruning on inference performance. From \eqref{eq:rd-lower}, the required rate decreases as the distortion threshold $D$ increases, indicating the inherent trade-off between model compression and performance preservation. In addition, the bound depends on the differential entropy $h(\bm{W})$, which implies that models with more information-rich parameters are generally more sensitive to compression.}
\end{remark}

However, extending this result to multi-device co-inference is challenging. Unlike the single-device case, the overall inference performance depends not only on the distortion of individual models but also on the interactions among devices. Since different devices may provide different amounts of unique and redundant information, their contributions to the final inference result are inherently heterogeneous and must be explicitly considered in the analysis.

\subsubsection{Rate-distortion Analysis in Multi-device Case}

To capture device heterogeneity, we introduce an importance coefficient $\alpha_{m,k}$ for device $k\in\mathcal{K}_m$ in cluster $m\in\mathcal{M}$.  Incorporating $\alpha_{m,k}$ into the distortion measure enables accounting for the unequal contributions of different devices when enforcing a task-level distortion constraint (i.e., a global distortion budget $D$).
Specifically, we establish the following result.
\begin{proposition}\label{prop_multi-device}
\emph{Let $\bm{W}^{(m,k)} \in \mathbb{R}^{q_{m,k}}$ denote the parameter vector of device $k$ in cluster $m$ and $Q^{\rm tot}_{m}=\sum\limits_{k\in\mathcal{K}_m} q_{m,k}$. 
Under the distortion function 
$D_{m,k} \triangleq \|\bm{W}^{(m,k)} - \widehat{\bm{W}}^{(m,k)}\|_{\mathrm{F}}$ 
and a global importance-weighted constraint $\sum_{k \in {{\cal K}_m}} \alpha_{m,k} D_{m,k} \le D, \forall m$ 
with $\alpha_{m,k} > 0$, the rate-distortion function for each device is lower bounded by
\begin{align}\label{eq:per_device_bound}
{\dot R}_{m,k}(D_{m,k}) \ge & 
h\left(\bm{W}^{(m,k)}\right)
\!+\! q_{m,k} \!\log\!\left(\frac{Q^{\rm tot}_{m}\!\alpha_{m,k}}{\sqrt{\pi e}D}\right)
\!\!\notag\\
&+\log\!\left(\frac{\Gamma(q_{m,k}/2)}{2\Gamma(q_{m,k})}\right),
\end{align}
where $\Gamma(\cdot)$ is the gamma function.
Thus, the importance-weighted distortion-rate function is
\begin{align}\label{eq:weighted_distortion}
\!\!\!\!\!D_{m,k}(R_{m,k}) \!
&\ge\! \!
\frac{\alpha_{m,k} q_{m,k}}{\sqrt{\pi e}}
2^{-\frac{\rho_{m,k} q_{m,k} \gamma - h(\bm{W}^{(m,k)})}{q_{m,k}}}
\!\left(\!
\frac{\Gamma(q_{m,k}/2)}{2\,\Gamma(q_{m,k})}\!
\right)^{\!\!\frac{1}{q_{m,k}}}\!\notag\\
   & \triangleq \!\widehat{D}_{m,k}(\rho_{m,k}),\!\!
\end{align}
where $\gamma$ denotes the number of quantization bits per parameter.}
\end{proposition}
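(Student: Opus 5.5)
The plan is to reduce the multi-device statement to the single-device bound of Lemma~\ref{lemma:laplace-converse}, applied device by device with a device-specific distortion budget. The key idea is to allocate the global weighted budget $D$ across devices in proportion to their parameter dimensions. Concretely, we will argue that the importance-weighted constraint $\sum_{k\in\mathcal{K}_m}\alpha_{m,k}D_{m,k}\le D$ is met by the allocation
\begin{align*}
\alpha_{m,k}D_{m,k}\le \frac{q_{m,k}}{Q^{\rm tot}_m}D,
\end{align*}
since summing over $k$ gives exactly $D$. Equivalently, device $k$ is assigned the effective per-device budget $D_{m,k}\le q_{m,k}D/(Q^{\rm tot}_m\alpha_{m,k})$. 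This dimension-proportional split is the natural one: the single-device bound scales with $q\log(q/D)$, and splitting $D$ in proportion to $q_{m,k}$ makes the per-parameter distortion identical across devices.

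With this budget in hand, we will invoke Lemma~\ref{lemma:laplace-converse} for $\bm{W}^{(m,k)}\in\mathbb{R}^{q_{m,k}}$, which is Laplacian by assumption. Substituting the budget into \eqref{eq:rd-lower} with $q=q_{m,k}$ turns the term
\begin{align*}
q_{m,k}\log\!\left(\frac{q_{m,k}}{\sqrt{\pi e D_{m,k}}}\right)
\end{align*}
into an expression in which $q_{m,k}$ cancels, leaving $Q^{\rm tot}_m\alpha_{m,k}/D$ inside the logarithm. This yields \eqref{eq:per_device_bound}. One notational issue needs care here. The stated bound carries $\sqrt{\pi e}\,D$ rather than $\sqrt{\pi e D}$, so we must follow the convention of \cite{Lyu2025JSAC}, where the square root is effectively absorbed because $D$ bounds the Frobenius norm rather than its square. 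We will state that convention explicitly and carry it through consistently.

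Next we invert the bound to obtain \eqref{eq:weighted_distortion}. We identify the rate with the pruned model's description length, $R_{m,k}=\rho_{m,k}q_{m,k}\gamma$ bits, i.e., $\widehat q_{m,k}=\rho_{m,k}q_{m,k}$ retained parameters each quantized with $\gamma$ bits. All logarithms are taken base $2$. Any achievable pair must satisfy $R_{m,k}\ge\dot R_{m,k}(D_{m,k})$, so we rearrange \eqref{eq:per_device_bound} to isolate the logarithm of the distortion term, divide by $q_{m,k}$, and exponentiate. This produces the factor
\begin{align*}
2^{-(\rho_{m,k}q_{m,k}\gamma-h(\bm{W}^{(m,k)}))/q_{m,k}}\left(\Gamma(q_{m,k}/2)/(2\Gamma(q_{m,k}))\right)^{1/q_{m,k}}
\end{align*}
multiplied by the prefactor $q_{m,k}/\sqrt{\pi e}$. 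Multiplying through by the importance weight, so that the quantity bounded is the weighted distortion, gives the factor $\alpha_{m,k}$ and hence the stated $\widehat D_{m,k}(\rho_{m,k})$.

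The main obstacle is the first step: justifying the proportional split as the allocation for which the per-device bound holds. For an arbitrary feasible allocation, the bound need not hold device by device. We will therefore present the split as the symmetric, per-parameter-equalizing allocation. Then we will argue via convexity that it is optimal for the sum-rate: minimizing $\sum_k q_{m,k}\log(1/D_{m,k})$ subject to $\sum_k\alpha_{m,k}D_{m,k}\le D$ has KKT stationarity condition $q_{m,k}/D_{m,k}=\lambda\alpha_{m,k}$, which is exactly the proportional split. The resulting per-device bounds are then interpreted at this optimum.
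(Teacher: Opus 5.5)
Your proposal is correct and follows essentially the same route as the paper. The paper's proof solves the same convex sum-rate problem through its Lagrangian: the stationarity condition $q_{m,k}/D_{m,k}=\lambda\alpha_{m,k}$ with $\lambda^\star=Q^{\rm tot}_m/D$ gives $D^\star_{m,k}=q_{m,k}D/(\alpha_{m,k}Q^{\rm tot}_m)$, which is then substituted into Lemma~\ref{lemma:laplace-converse}. The paper's Lagrangian also silently uses the $-q_{m,k}\log D_{m,k}$ (i.e., $\sqrt{\pi e}\,D$) convention you flag, and it stops before the inversion to $\widehat{D}_{m,k}(\rho_{m,k})$, which your proposal carries out explicitly.
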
 
\begin{proof}
We first consider the optimization problem
\begin{align}
\min_{\{D_{m,k}\ge 0\}} & \ \sum\limits_{k\in\mathcal{K}_m} {\dot R}_{m,k}(D_{m,k})
\notag\\
\text{s.t.}\quad &\sum\limits_{k\in\mathcal{K}_m} \alpha_{m,k} D_{m,k} \le D,
\end{align}
which is convex. Its Lagrangian is
\begin{align} \label{Lagrangian}
\mathcal{L}(\{D_{m,k}\},\lambda)&=\sum\limits_{k\in\mathcal{K}_m}\!\left(h(\bm{ W}^{(m,k)})+q_{m,k}\log\tfrac{q_{m,k}}{\sqrt{\pi e}}\right)\notag\\
&~~~~~+\sum\limits_{k\in\mathcal{K}_m}\!\left(\log\tfrac{\Gamma(q_{m,k}/2)}{2\,\Gamma(q_{m,k})}-q_{m,k}\log D_{m,k}\right)\notag\\
&~~~~~+\lambda\left(\sum\limits_{k\in\mathcal{K}_m} \alpha_{m,k} D_{m,k}-D\right).
\end{align}
By setting the derivative of \eqref{Lagrangian} with respect to (w.r.t.) $D_{m,k}$ as zero, i.e., $\frac{\partial \mathcal L}{\partial D_{m,k}}=-\frac{q_{m,k}}{D_{m,k}}+\lambda \alpha_{m,k}=0$, we have $D_{m,k}^\star={q_{m,k}}/{\lambda\,\alpha_{m,k}}$. Substituting $D_{m,k}^\star$ into $\sum\nolimits_{k} \alpha_{m,k} D_{m,k} = D$ yields $\lambda^{\star}=\frac{Q^{\rm tot}_{m}}{D}$.
Therefore, we have $D_{m,k}^\star=\frac{q_{m,k}}{\alpha_{m,k}}\cdot\frac{D}{Q^{\rm tot}_{m}}$.
\end{proof}

\begin{remark}\emph{
Proposition~\ref{prop_multi-device} extends the single-device rate-distortion analysis to multi-device collaborative inference by explicitly accounting for the heterogeneous contributions of different devices with unequal pruning distortions.
Several observations can be drawn from \eqref{eq:per_device_bound} and \eqref{eq:weighted_distortion}. First, the distortion-rate relationship preserves the fundamental trade-off. A higher coding rate, corresponding to a lower pruning level, leads to a smaller distortion, while aggressive pruning increases the distortion. This indicates that model pruning remains a key factor affecting inference quality in co-inference systems.
Second, the device importance coefficient $\alpha_{m,k}$ directly affects the achievable distortion. For a given coding rate, a device with a larger importance coefficient incurs a larger weighted distortion penalty. Therefore, important devices require more preserved model parameters and should be assigned a lower pruning level to maintain the overall inference quality. In contrast, devices with smaller importance coefficients can tolerate more aggressive pruning with a limited impact on the task-level distortion.
Finally, it also reveals a new coupling between model compression and multi-device information fusion. Since the global distortion constraint depends on the weighted sum of device distortions, the pruning ratio of each device should be jointly designed according to its contribution to the final inference task. This result suggests that uniform pruning across devices is generally suboptimal. Instead, importance-aware pruning can better balance inference accuracy and communication overhead in co-inference systems.}
\end{remark}

\subsubsection{Importance-aware Policy}
Finally, we interpret the importance coefficient $\alpha_{m,k}$ from an information-theoretic perspective. To measure the relative contribution of each device, we use the PID framework, which separates each edge device’s contribution into unique, redundant, and synergistic parts.

\begin{definition}\emph{
For any two edge devices with features $\bm{o}_1$ and $\bm{o}_2$, and a target task $Y$, PID decomposes the mutual information $I(Y;\bm{o}_1,\bm{o}_2)$ into nonnegative components as
\begin{equation}
\begin{aligned}
&I(Y;\bm{o}_1,\bm{o}_2) = \mathcal{R} + \mathcal{U}_1 + \mathcal{U}_2 + \mathcal{S}, \\
&I(Y;\bm{o}_1) = \mathcal{R} + \mathcal{U}_1, \quad 
I(Y;\bm{o}_2) = \mathcal{R} + \mathcal{U}_2,
\end{aligned}
\end{equation}
where $\mathcal{R}$ denotes the redundant information shared by both devices, $\mathcal{U}_k$ represents the unique information contributed by device $k$, and $\mathcal{S}$ corresponds to the synergistic information jointly provided by both devices.}
\end{definition}

While the PID framework provides an intuitive information-theoretic formalism to disentangle redundant, unique, and synergistic task-relevant information across edge devices, exact full PID computation suffers from prohibitive combinatorial complexity with more than two devices (i.e., $|\mathcal{K}_m|>2$). To retain PID’s core interpretive logic of quantifying per-device data contribution while enabling tractable computation for multi-device clusters, we leverage Shapley value (SV)-based information allocation. The Shapley value naturally aligns with PID’s fundamental goal of attributing predictive value to individual sensing nodes, and recovers the exact marginal mutual information defined by PID in the two-device special case, serving as a scalable proxy for multi-node information decomposition\footnote{The proposed contribution coefficients are estimated offline or updated only periodically, using representative training or historical sensing data. Since device viewpoints and sensing configurations typically vary much more slowly than the inference process, the estimated coefficients can be reused across multiple inference tasks. Therefore, their computation is not part of the co-inference pipeline and does not contribute to the latency considered in the proposed delay model.}.
Rooted in cooperative game theory, the SV allocates the total utility by quantifying the marginal contribution of each participant across all possible coalitions. As a result, it offers a principled way to evaluate the relative importance of different participants in a collaborative system \cite{Shapley1952,Ghorbani2019DataSE}.
Specifically, let a subset of devices $\hat{\mathcal{K}}_m \subseteq \mathcal{K}_m$ be treated as a coalition of players with a coalition value $\delta(\hat{\mathcal{K}}_m)$. Given an evaluation dataset with $N$ samples 
$\{(\{o_{\hat{k},i}\}_{\hat{k} \in \hat{\mathcal{K}_m} },\, y_i)\}_{i=1}^N$, 
the coalition value is defined as
\begin{align}
\delta(\hat{\mathcal{K}}_m) = 
-\frac{1}{N}\sum\nolimits_{i=1}^N 
\log p_F\!\left(y_i \mid \{o_{\hat{k},i}\}_{\hat{k} \in \hat{\mathcal{K}}_m} \right),
\end{align}
where $p_F(\cdot)$ denotes the predictive distribution. 
Accordingly, the importance coefficient of device $k\in\mathcal{K}_m$ in cluster $m\in\mathcal{M}$ is obtained via the SV as
\begin{align}\label{eq:shapley}
&\alpha_{m,k} \!\!= \notag\\
&
\sum_{\hat{\mathcal{K}}_m\subseteq\mathcal{K}_m\setminus\{k\}} \!\!\!\!\!\!
\frac{|\hat{\mathcal{K}}_m|!(|\mathcal{K}_m|-|\hat{\mathcal{K}}_m|\!-\!1)!}{|\mathcal{K}_m|!}
\left[\delta(\hat{\mathcal{K}}_m\cup\{k\}) \!-\! \delta(\hat{\mathcal{K}}_m)\!\right],
\end{align}
which can be estimated through Monte Carlo sampling over several random device permutations.

\section{Joint Pruning Ratio, Task Scheduling, and Communication Design}

We aim to minimize the inference distortion by jointly optimizing the on-device pruning ratios $\{\rho_{m,k}\}$, 
transmit powers $\{p_{m,k}\}$, bandwidth allocation $\{b_{m,k}\}$, server scheduling decisions $\{x_{m,g,q}\}$, 
and task completion times $\{T_{g,q}\}$, subject to latency and energy constraints. 
Denoting $\Phi = \{\rho_{m,k}, p_{m,k}, b_{m,k},x_{m,g,q}, T_{g,q}\}$, 
the corresponding optimization problem is formulated as
\begin{subequations}\label{P1}
	\begin{align}
	\text{(P1)}:\mathop {\min }\limits_{\Phi} &  ~\sum\limits_{m \in \cal M} \sum\limits_{k\in\mathcal{K}_m} \widehat D_{m,k}(\rho_{m,k}) \nonumber\\
	\mathrm{s.t.}~&
	\max\limits_{g,q} ~\{T_{g,q}\}  \le T_0 \label{P1-a}\\
	~
	& {\hat E}(\Phi) \le E_0 \label{P1-b}\\
	~
	&	T_{g,q} \!\ge\! T_{g,q-1} \!\!+\!\! \sum\nolimits_{m \in \cal M} x_{m,g,q}{\hat t}_m^{\rm I}, \forall g,q >1 \label{P1-c}\\
	~
	&T_{g,q} \ge A_{g,q}(\{x_{m,g,q},\rho_{m,k},p_{m,k},b_{m,k}\}) \nonumber \\
    &  \qquad + \sum\nolimits_{m \in \cal M} x_{m,g,q}{\hat t}_m^{\rm I},  \forall q\in \mathcal{Q}_g, g\in\mathcal{G}\label{P1-d}\\
	~
	& 0 \le \rho_{m,k} \le 1, \forall k\in\mathcal{K}_m, m\in\mathcal{M} \label{P1-e}\\
	~
	&x_{m,g,q}\in\{0,1\}, \forall m\in\mathcal{M},q\in \mathcal{Q}_g, g\in\mathcal{G} \label{P1-f}\\
	~
	& \eqref{Sys_Banwidth}, \eqref{maxpower}, ~\eqref{Sys_Conect1},~\eqref{Sys_Conect2},~\text{and}~\eqref{Sys_FIFO} \notag,
	\end{align}
\end{subequations}
where \eqref{P1-a} and \eqref{P1-b} represent the total delay and energy budget constraints, respectively;  \eqref{P1-c} and \eqref{P1-d} correspond to the queueing relationship defined in \eqref{Tgp};  \eqref{Sys_Conect1}, \eqref{Sys_Conect2}, and \eqref{P1-f} specify the feasible domain of the scheduling variables $\{x_{m,g,q}\}$. The problem is highly non-convex due to the strong coupling among the optimization variables. 
Furthermore, the inclusion of  $\{x_{m,g,q}\}$ renders problem (P1) a mixed-integer nonlinear programming (MINLP) problem, which is generally non-convex and non-trivial to solve optimally.

Therefore, we adopt an alternating algorithm to efficiently solve the above non-convex problem. In particular, we first optimize the task offloading decisions and completion time $\{x_{m,g,q}, T_{g,q}\}$ under given $\{\rho_{m,k},p_{m,k}, b_{m,k}\}$ by introducing an auxiliary variable to obtain a solvable mixed-integer linear programming (MILP) problem. With the obtained task offloading decisions $\{x_{m,g,q}\}$, we then optimize pruning ratios, transmit power, bandwidth allocation, and completion times $\{\rho_{m,k},p_{m,k},b_{m,k},T_{g,q}\}$, in which the successive convex approximation (SCA) technique is used.

\subsection{Optimization of task offloading decisions and completion time}

To tackle problem (P1), we first optimize the task offloading decisions and completion times $\{x_{m,g,q}, T_{g,q}\}$ under fixed pruning ratios, transmit power, and bandwidth allocation $\{\rho_{m,k},p_{m,k}, b_{m,k}\}$, where the resulting problem becomes a feasibility-checking problem. 
Thus, we focus on minimizing delay plus server energy to create slack feasibility regions for the subsequent optimization, which leads to 
\begin{align}
	\text{(P2)}:\mathop {\min }\limits_{\{x_{m,g,q}, T_{g,q}\}} &~   {\hat E}(\{x_{m,g,q}, T_{g,q}\}) + \max\nolimits_{g,q} \{T_{g,q}\}~\nonumber\\
	\mathrm{s.t.}~~~~& \eqref{Sys_Banwidth} - \eqref{Sys_Conect2}, ~\eqref{Sys_FIFO}, \eqref{P1-c},~ \eqref{P1-d}, {\rm and} ~\eqref{P1-f} \nonumber.
\end{align}
Problem (P2) remains non-convex. 
To address this issue, we introduce an auxiliary variable $\Upsilon$ to relax the term 
$\max\nolimits_{g,q}\{T_{g,q}\}$ by imposing the additional constraints 
$\Upsilon \ge T_{g,q},~\forall g,q$, where Problem (P2) is reformulated as  
\begin{align*}
\min \limits_{\{x_{m,g,q}, T_{g,q},\Upsilon\}} &~   {\hat E}(\{x_{m,g,q}, T_{g,q}\}) + \Upsilon~\nonumber\\
	\mathrm{s.t.}~~~~~&~ \Upsilon \ge T_{g,q},~\forall q\in \mathcal{Q}_g, g\in\mathcal{G} \\
    &\eqref{Sys_Banwidth} - \eqref{Sys_Conect2}, ~\eqref{Sys_FIFO}, \eqref{P1-c},~ \eqref{P1-d}, {\rm and} ~\eqref{P1-f} \nonumber.
\end{align*}
The above problem yields an MILP problem, which can be efficiently solved using standard optimization solvers such as Gurobi \cite{Gurobi}.

\subsection{Optimization of Network Resource Allocation}
Then, we optimize pruning ratios, transmit power, bandwidth allocation, and completion times $\tilde \Phi= \{\rho_{m,k},p_{m,k},b_{m,k},T_{g,q}\}$, under fixed task offloading decision $\{x_{m,g,q}\}$, i.e.,
\begin{align}
	\text{(P3)}:\min\limits_{\tilde{\Phi}} &  ~\sum_{m \in \cal M} \sum_{k\in\mathcal{K}_m} \widehat D_{m,k}(\rho_{m,k}) \nonumber\\
	\mathrm{s.t.}~&
	{\hat E}(\tilde \Phi) \le E_0 \label{P3-a}\\
	~
	& \eqref{Sys_Banwidth}, \eqref{maxpower}, ~ \eqref{Sys_FIFO}, ~\eqref{P1-a}, ~\text{and}~ \eqref{P1-c} \sim \eqref{P1-e}. \nonumber
\end{align}
Problem (P3) is still non-convex due to constraints \eqref{P3-a} and \eqref{Sys_FIFO}.
Specifically, due to the property of perspective function, the term $R_{m,k}(p_{m,k},b_{m,k})$ in \eqref{eq:uplink-rate} is concave w.r.t. both $p_{m,k}$ and $b_{m,k}$ \cite{convex-Bouyd}.
Thus, the non-convexity arises from the term $e^{\mathrm{U}}(\{p_{m,k},b_{m,k}\})$ within ${\hat E}(\tilde{\Phi})$ in \eqref{P3-a}, as well as from the convex term $A_{g,q+1}(\{x_{m,g,q}, \rho_{m,k}, p_{m,k}, b_{m,k}\})$ on the right-hand side of \eqref{Sys_FIFO}.
To tackle this issue, we use SCA technique to approximate a non-convex term as a convex one. 
Specifically, for the $j$-th iteration, we approximate $e^{\mathrm{U}}(p_{m,k},b_{m,k})$ at local points $p^{(j)}_{m,k}$ and $b^{(j)}_{m,k}$ using its first-order Taylor expansion as 
\begin{align} \label{p approximation}
&\frac{p_{m,k}{\beta}}{b_{m,k} \log_2 \left(1+\frac{g_{m,k}p_{m,k}}{b_{m,k}N_0} \right)}  \approx \frac{p_{m,k}^{(j)}{\beta}}{b^{(j)}_{m,k} \log_2 \left(1+\frac{g_{m,k} p_{m,k}^{(j)}}{ b^{(j)}_{m,k} N_0} \right)} +\nonumber \\
&~~v(p^{(j)}_{m,k})(p_{m,k}\!-\!p_{m,k}^{(j)}) +\!u(b^{(j)}_{m,k})(b_{m,k}\!-\!b_{m,k}^{(j)})\nonumber \\
&~~\triangleq \bar \zeta^{(j)}(p_{m,k},b_{m,k}),
\end{align}
where 
\begin{align} 
&v\left(p_{m,k}^{(j)}\right)=\frac{\beta}{b_{m,k}^{(j)} \log_2 \left(1+\frac{g_{m,k} p_{m,k}^{(j)}}{ b_{m,k}^{(j)} N_0} \right)}- \nonumber \\
&\frac{p_{m,k}^{(j)}{\beta}g_{m,k}}{b_{m,k}^{(j)}\left( \log_2 \left( 1+\frac{g_{m,k} p_{m,k}^{(j)}}{ b_{m,k}^{(j)} N_0} \right)\right)^2 \left( b_{m,k}^{(j)} N_0 +g_{m,k} p_{m,k}^{(j)}\right) \ln2},\notag
\end{align}
and 
\begin{align} 
&u(b_{m,k}^{(j)})=\frac{p_{m,k}^{(j)}\beta}{\left(b_{m,k}^{(j)}\right)^2 \log_2 \left(1+\frac{g_{m,k} p_{m,k}^{(j)}}{ b_{m,k}^{(j)} N_0} \right)}\times \nonumber \\
&\left(\frac{p_{m,k}^{(j)}g_{m,k}}{\log_2 \left( 1+\frac{g_{m,k} p_{m,k}^{(j)}}{ b_{m,k}^{(j)}N_0} \right) \left( b_{m,k}^{(j)} N_0 +g_{m,k} p_{m,k}^{(j)}\right) \ln2}-1\right)\notag.
\end{align}

Next, we approximate $t^{\mathrm{R}}_m(\{\rho_{m,k}, p_{m,k}, b_{m,k}\})$ in 
$A_{g,q+1}(\cdot)$, which is convex but non-differentiable as  in~\eqref{tRm}. 
To tackle it, we apply the SCA 
based on Danskin's theorem and utilize its subgradient form. 
For the $j$-th iteration with the local point $\{\rho^{(j)}_{m,k}, p^{(j)}_{m,k},b^{(j)}_{m,k}\}$, let $k_m^{\star}$ denote the active device satisfying
\begin{align}
k_m^{\star} 
\in \arg\max_{k \in \mathcal{K}_m}
\left\{
t^{\mathrm{A}}_{m,k}\left(\rho^{(j)}_{m,k}, p^{(j)}_{m,k},b_{m,k}^{(j)}\right)
\right\}.
\end{align}

Then, $t^{\mathrm{R}}_m(\{\rho_{m,k}, p_{m,k}, b_{m,k}\})$ is linearly approximated as
\begin{align}\label{Alg_SCA}
&t^{\mathrm{R}}_m(\{\rho_{m,k}, p_{m,k}, b_{m,k}\}) \ge 
t^{\mathrm{R}}_m\left(\{\rho^{(j)}_{m,k}, p^{(j)}_{m,k}, b^{(j)}_{m,k}\}\right) \nonumber \\
&\!+\! \underline{v}\left(p_{m,k_m^{\star}}^{(j)}\!\right)\!\left(p_{m,k_m^{\star}} \!-\! p_{m,k_m^{\star}}^{(j)}\right)\!+\!\underline{u}\left(b_{m,k_m^{\star}}^{(j)}\right)\left(b_{m,k_m^{\star}} - b_{m,k_m^{\star}}^{(j)}\right) \nonumber\\
&\!+\!\hat{\underline{v}}\left(\!\rho_{m,k_m^{\star}}^{(j)}\!\right)\!\left(\!\rho_{m,k_m^{\star}} \!- \!\rho_{m,k_m^{\star}}^{(j)}\right)
\triangleq 
\underline{t}_m^{\mathrm{R}^{(j)}}(\{\rho_{m,k}, p_{m,k}, b_{m,k}\}),
\end{align}
where $\hat{\underline{v}}\left(\rho_{m,k_m^{\star}}^{(j)}\right)$, $\underline{v}\left(p_{m,k_m^{\star}}^{(j)}\right)$, and $\underline{u}\left(b_{m,k_m^{\star}}^{(j)}\right)$ denote the partial derivatives of $t^{\mathrm{R}}_m(\{\rho_{m,k}, p_{m,k}, b_{m,k}\})$, respectively.
Specifically, we have
\begin{align}\label{vlow}
&\underline{v}\left(p_{m,k_m^{\star}}^{(j)}\right) =\nonumber \\
&\frac{-\beta g_{m,k_m^{\star}}}
{b_{m,k_m^{\star}}^{(j)}\!\!\left(b_{m,k_m^{\star}}^{(j)}N_0\! +\! g_{m,k_m^{\star}}p_{m,k_m^{\star}}^{(j)}\!\right)\!\!\ln 2}
\!\cdot\!
\frac{1}
{\!\!\Big[\log_2\!\left(\!1\!+\!\tfrac{g_{m,k_m^{\star}}p_{m,k_m^{\star}}^{(j)}}{b_{m,k_m^{\star}}^{(j)}N_0}\!\!\right)\!\Big]^2},\\
&\underline{u}\left(b_{m,k_m^{\star}}^{(j)}\right)=\frac{\beta}{\left(b_{m,k_m^{\star}}^{(j)}\right)^2 \log_2 \left(1+\frac{g_{m,k_m^{\star}} p_{m,k_m^{\star}}^{(j)}}{ b_{m,k_m^{\star}}^{(j)} N_0} \right)}\times \nonumber \\
&\left(\frac{p_{m,k_m^{\star}}^{(j)}g_{m,k_m^{\star}}}{\log_2 \left( 1+\frac{g_{m,k_m^{\star}} }{ b_{m,k_m^{\star}}^{(j)}N_0} \right) \left( b_{m,k_m^{\star}}^{(j)} N_0 +g_{m,k_m^{\star}} p_{m,k_m^{\star}}^{(j)}\right) \ln2}-1\right)\notag,\\
&\hat{\underline{v}}\left(\rho_{m,k_m^{\star}}^{(j)}\right)
= 
{N^{\mathrm{FLOP}}_{m,k_m^{\star}}}/
{f_{m,k_m^{\star}}\,c_{m,k_m^{\star}}}.
\end{align}

Furthermore, we introduce a trust region constraint to tighten the lower bound of $\underline{t}_m^{\mathrm{R}^{(j)}}(\{\rho_{m,k}, p_{m,k}, b_{m,k}\})$ in \eqref{Alg_SCA}  due to the non-convex constraint in \eqref{Sys_FIFO}. Denote $\boldsymbol{p}=[p_{1,1},\cdots,p_{1,K_1},p_{2,1},\cdots,p_{M,K_M}]$ and $\boldsymbol{b}=[b_{1,1},\cdots,b_{1,K_1},b_{2,1},\cdots,b_{M,K_M}]$ as the vectors consisting of all optimization variables $\{p_{m,k}\}$ and $\{ b_{m,k}\}$, respectively. Therefore, the trust region constraint is expressed as
\begin{align}  \left\|\left[\boldsymbol{p},\boldsymbol{b}\right]^T-\left[\boldsymbol{p}^{(j)},\boldsymbol{b}^{(j)}\right]^T\right\|_2\leq \Gamma_{\text{tru}}, \label{Alg_SCA_Tru}
\end{align}
with $\boldsymbol{p}^{(j)}=\left[p_{1,1}^{(j)},\cdots,p_{1,K_1}^{(j)},p_{2,1}^{(j)},\cdots,p_{M,K_M}^{(j)}\right]$, $\boldsymbol{b}^{(j)}=\left[b_{1,1}^{(j)},\cdots,b_{1,K_1}^{(j)},b_{2,1}^{(j)},\cdots,b_{M,K_M}^{(j)}\right]$
and $\Gamma_{\text{tru}}$ denoting the trust region radius and $T$ representing the transpose operation.
Accordingly, the convex version  of problem (P3) at the $j$-th iteration is formulated as
\begin{subequations}\label{P4}
\begin{align}
\text{(P4.$j$):}~ 
\min_{\tilde{\Phi}}~~
& \sum_{m \in \mathcal{M}} \sum_{k \in \mathcal{K}_m} 
\widehat{D}_{m,k}(\rho_{m,k})  \nonumber\\
\mathrm{s.t.}~~
& e^{\mathrm{I}}(\{\rho_{m,k}\})
\!+\! \sum_{m \in \cal M} \sum_{k \in {\cal K}_m} \bar{\zeta}^{(j)}(p_{m,k},b_{m,k}) \nonumber \\
& \qquad + \hat{e}^{\mathrm{I}} (\{T_{g,q}\}) 
\!\!\le\! E_0 \notag\\
& A_{g,q}(\{x_{m,g,q}, \rho_{m,k}, p_{m,k}, b_{m,k}\})
\le \nonumber \\
& \sum_{m=1}^{M} \!\!x_{m,g,q+1}\,
\underline{t}_m^{\mathrm{R}^{(j)}}\!(\{\rho_{m,k}, p_{m,k},b_{m,k}\}), \forall g,q \notag\\
& \eqref{Sys_Banwidth}, \eqref{maxpower}, ~\eqref{P1-a},~ \eqref{P1-c} \sim \eqref{P1-e},~\text{and}, \eqref{Alg_SCA_Tru}.\nonumber
\end{align}
\end{subequations}
The above problem is convex, which can be directly solved via CVX \cite{cvx}.


\section{Numerical Results}\vspace{-0.1cm}
In this section, we evaluate the proposed pruning-aware multi-cluster LAIM co-inference framework through two representative tasks: image classification and video-to-text captioning.
All experiments are implemented on NVIDIA RTX~3090 GPUs. Typical parameters are summarized in Table \ref{table_parameters}. The PSD is $N_0=2\times10^{-19}$~W/Hz. 
Wireless channels are modeled as independent and identically distributed (i.i.d.) Rayleigh fading with an average path loss of $10^{-5}$. 
Each cluster contains three partial views with horizontal ranges $[0,0.2]$, $[0.2,0.6]$, and $[0.5,1.0]$, respectively.

\begin{table*}[htbp]
\centering
\caption{SIMULATION PARAMETERS}
\label{table_parameters}
  \renewcommand{\arraystretch}{1.2} 
\begin{tabular}{|c|l|c||c|l|c|}
\hline
\textbf{Parameter} & \textbf{Description} & \textbf{Value} &\textbf{Parameter} & \textbf{Description} & \textbf{Value} \\ \hline
$M$ & Number of clusters & 8 &
$K_m$ & Number of devices in each cluster & 3 \\ \hline
$G$ & Number of GPUs & 2 &
$Q_g$ & Queue length  & 8  \\ \hline
$f_{m,k}$ & On-device clock frequency  & 0.3~GHz  &
$c_{m,k}$ &  Number of FLOPs per cycle &32 \\ \hline
$f_g$ & GPU clock frequency & 1.0,0.85 ~GHz  &
$c_g$ &  Number of FLOPs per cycle of GPU & 128,140 \\ \hline
$P^{\rm max}_{m,k}$ &  Maximum transmit power  & 1~W &
$B$ &  Communication bandwidth  & 5~MHz \\ \hline
$\eta_g$ &   PUE & 2 &$ \phi_g$ & Power coefficient & $\{1,1.2\}\times10^{-28}~\rm{W/(cycle/s)^3}$ \\ \hline
\end{tabular}
\end{table*}
\vspace{10pt}

For the image-classification task, we adopt a vision transformer (ViT)-Tiny model based on the ViT architecture and split it at the third transformer layer for on-device and on-server deployment\footnote{It is noted that the experimental models are relatively lightweight compared with state-of-the-art foundation models. This choice enables controlled evaluation of the proposed analytical framework and optimization algorithm, while the developed framework remains applicable to larger LAIMs. In practical deployment, additional bottlenecks such as memory capacity, slow data retrieval, and KV-cache management also affect inference performance \cite{QuCOMST25_LAIM}. These factors are complementary to the communication and computation optimization considered in this work. Extending the proposed framework to explicitly incorporate memory constraints and memory access overhead is an important direction for future work.}.
Each cluster processes an independent CIFAR-10 inference request, where each image is partitioned into three local views following the method in \cite{Shao-TWC}, with cropping ratios of $0.2$, $0.4$, and $0.6$, respectively, and partially overlapping between the last two views.  The plotted energy and inference latency thresholds are $E_0\in[1.15,1.35]$~J and $T_0\in[1.32,1.40]$~s, respectively.

For the video-to-text task, we use MSR-VTT \cite{MSRVTT} and sample $8$ frames with resolution $224\times224$ from each video.
Each device uses a TinyViT encoder \cite{TinyViT}, and the server uses Qwen3-4B-Base \cite{Qwen3} with a learned projector and a low-rank adaptation (LoRA) adapter to generate captions from the fused view features.
The plotted energy and inference latency thresholds are $E_0\in[3.72,3.92]$~J and $T_0\in[10.2,11.8]$~s, respectively.
For captioning evaluation, we report bilingual evaluation understudy-4 (BLEU-4) \cite{BLEU} and  consensus-based image description evaluation (CIDEr) \cite{CIDEr}. BLEU-4 measures the modified 4-gram precision between a generated caption and its reference captions, with a brevity penalty to discourage overly short outputs. In our implementation, BLEU-4 is computed by averaging smoothed sentence-level BLEU scores over the evaluated samples.
CIDEr evaluates caption quality from a reference-consensus perspective by comparing TF-IDF-weighted $n$-gram representations of generated and reference captions, thereby assigning lower weights to common words and higher weights to more informative terms. The reported CIDEr follows the project evaluation script as a TF-IDF-weighted $n$-gram consensus score over the available references.

To further expose the role of GPU scheduling, we also conduct a server-workload-heterogeneous latency stress experiment for the task-level timeline visualization. This stress experiment uses the same multi-cluster system but increases the communication bandwidth to $120$~MHz for CIFAR-10 and $160$~MHz for video-to-text captioning, so that the timeline is mainly shaped by GPU queueing and on-server computation rather than by feature-uploading delays.
The corresponding energy and inference latency thresholds are $(E_0,T_0)=(1.2~{\rm J},3.0~{\rm s})$ for CIFAR-10 and $(E_0,T_0)=(8.5~{\rm J},22.0~{\rm s})$ for video-to-text captioning.
The on-server workloads are set to $\{8,120,8,8,8,140,130,150\}$~GFLOPs for CIFAR-10 and $\{220,900,260,240,280,940,920,960\}$~GFLOPs for video-to-text captioning.
This stress setting is used only for the latency-decomposition figures and is separated from the quality-threshold sweeps in Figs.~\ref{fig:vit_accuracy}--\ref{fig:video_cider}.

We consider the following benchmark schemes for comparison.
\begin{itemize}
\item  1) \textbf{Fixed power:} All devices transmit at the maximum power;
\item  2) \textbf{Fixed bandwidth:} The total bandwidth is equally allocated among all devices;
\item  3) \textbf{Round-robin scheduling:} Each cluster is sequentially assigned to GPUs in cyclic order without considering queue states;
\item 4) \textbf{Random scheduling:} Tasks are randomly assigned to GPUs and queue positions;
\item 5) \textbf{Equal importance:} Device importance coefficients are set uniformly as $\alpha_{m,k}=1/K_m, \forall m, k$.
\end{itemize}



\begin{figure*}[t]
    \centering
    \subfigure[Energy threshold]{%
        \includegraphics[width=.48\textwidth]{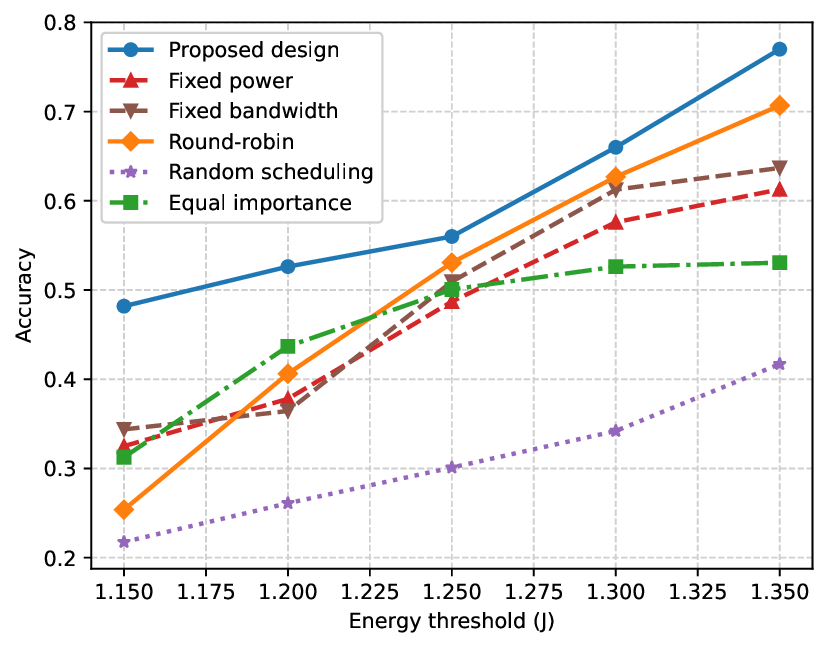}
        \label{fig:vit_accuracy_energy}}
    \hfill
    \subfigure[Inference latency threshold]{%
        \includegraphics[width=.48\textwidth]{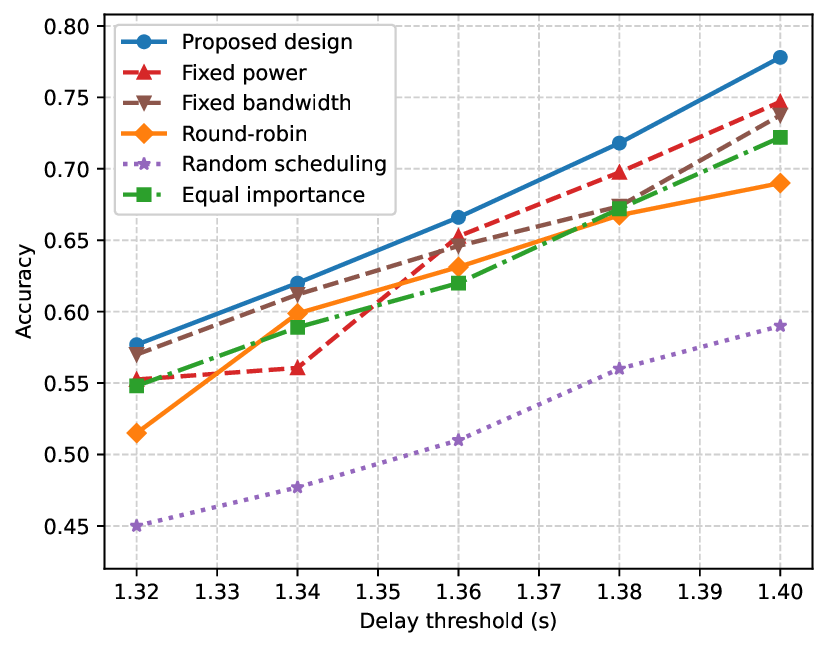}
        \label{fig:vit_accuracy_delay}}
    \caption{Classification accuracy on CIFAR-10 under different energy and inference latency thresholds.}
    \label{fig:vit_accuracy}
    \vspace{10pt}
\end{figure*}

\begin{figure*}[t]
    \centering
    \subfigure[Energy threshold]{%
        \includegraphics[width=.48\textwidth]{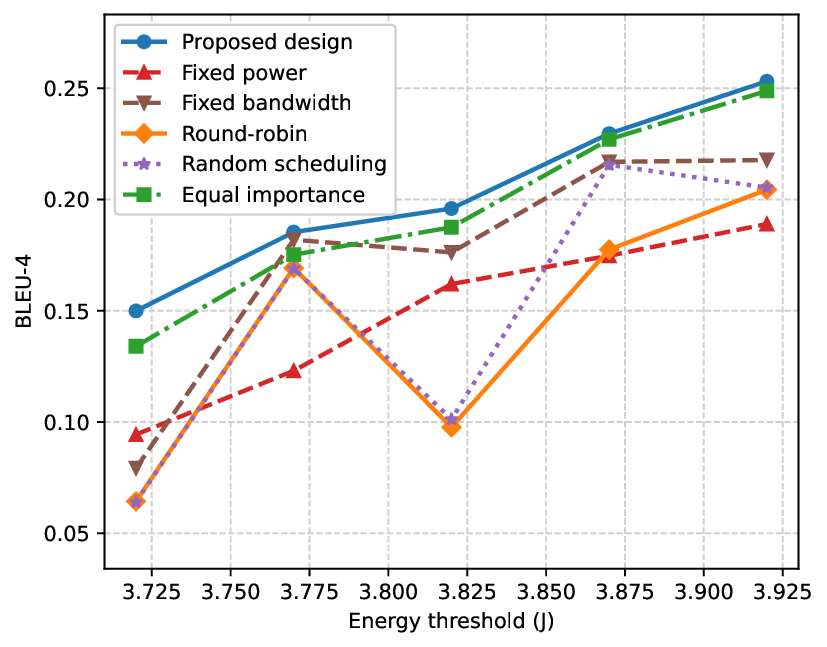}
        \label{fig:video_bleu_energy}}
    \hfill
    \subfigure[Inference latency threshold]{%
        \includegraphics[width=.48\textwidth]{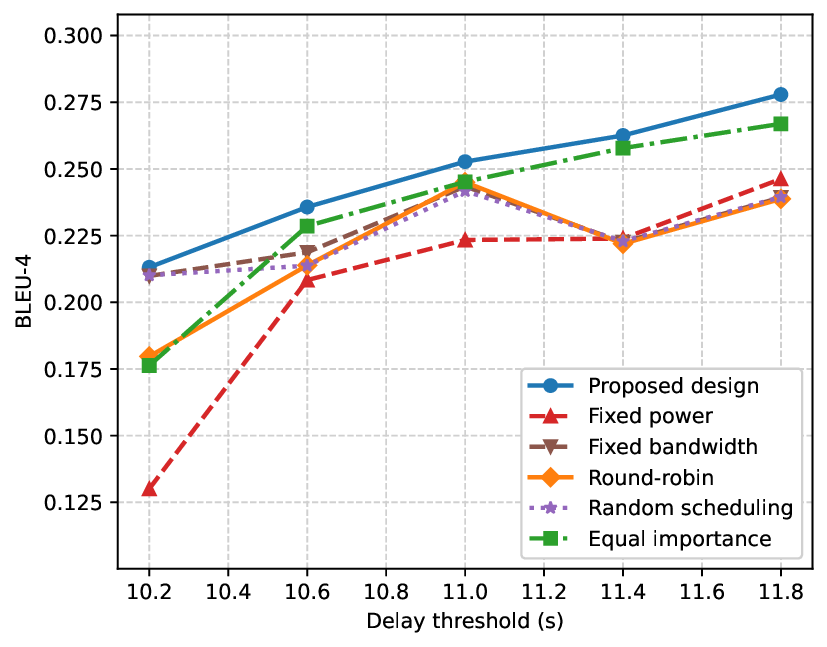}
        \label{fig:video_bleu_delay}}
    \caption{BLEU-4 scores for the video-to-text captioning task under different energy and inference latency thresholds.}
    \label{fig:video_bleu}
    \vspace{10pt}
\end{figure*}

\begin{figure*}[t]
    \centering
    \subfigure[Energy threshold]{%
        \includegraphics[width=.48\textwidth]{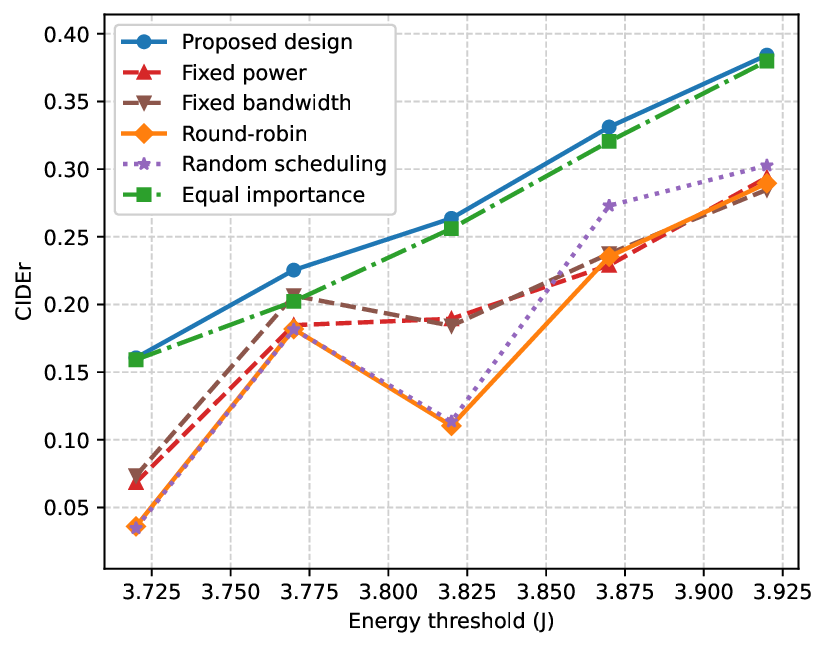}
        \label{fig:video_cider_energy}}
    \hfill
    \subfigure[Inference latency threshold]{%
        \includegraphics[width=.48\textwidth]{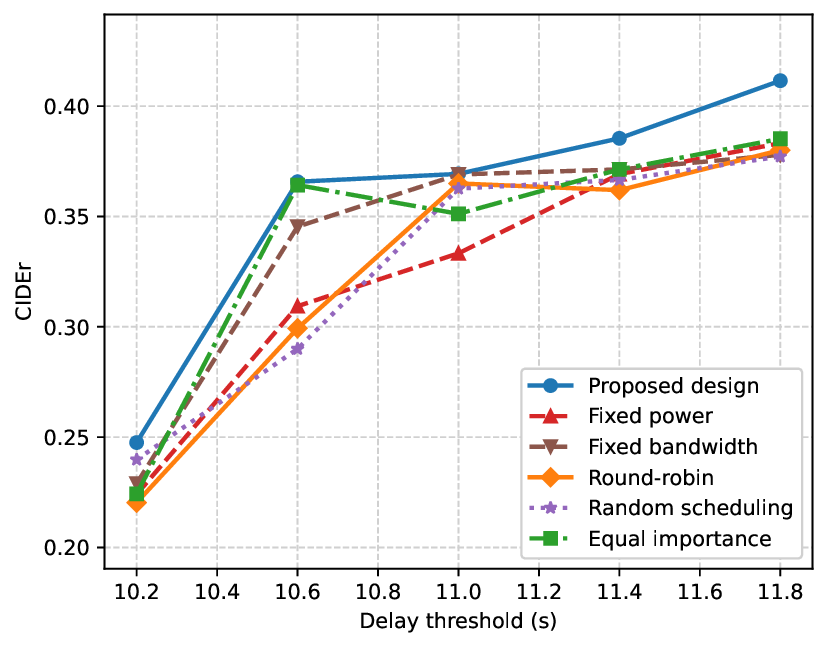}
        \label{fig:video_cider_delay}}
    \caption{CIDEr scores for the video-to-text captioning task under different energy and inference latency thresholds.}
    \label{fig:video_cider}
    \vspace{10pt}
\end{figure*}

\begin{figure*}[t]
    \centering
    \subfigure[Proposed design]{%
        \includegraphics[width=.32\textwidth]{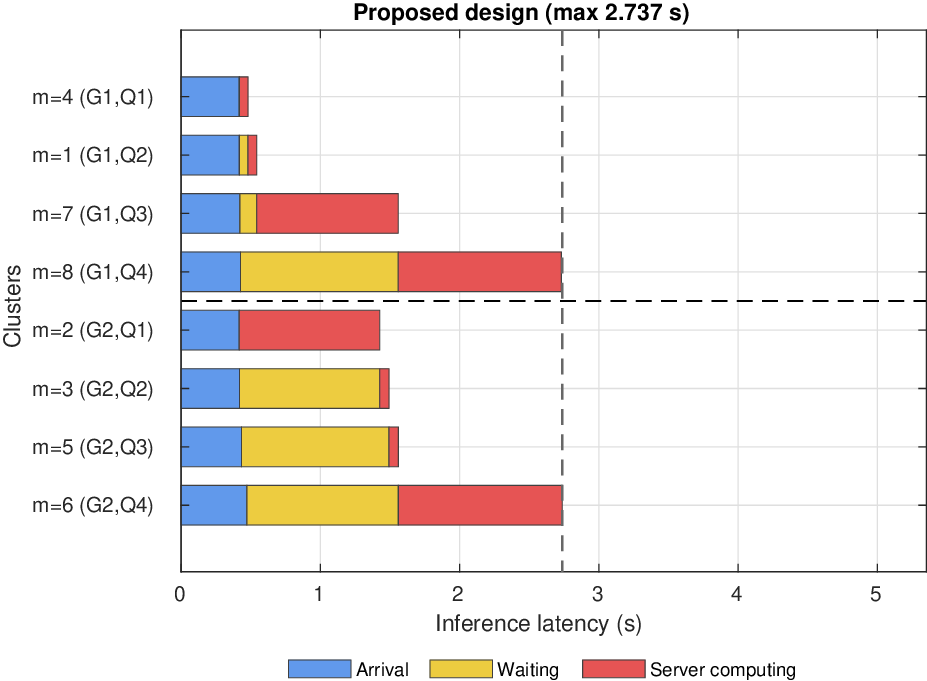}
        \label{fig:vit_delay_timeline_proposed}}
    \hfill
    \subfigure[Round-robin scheduling]{%
        \includegraphics[width=.32\textwidth]{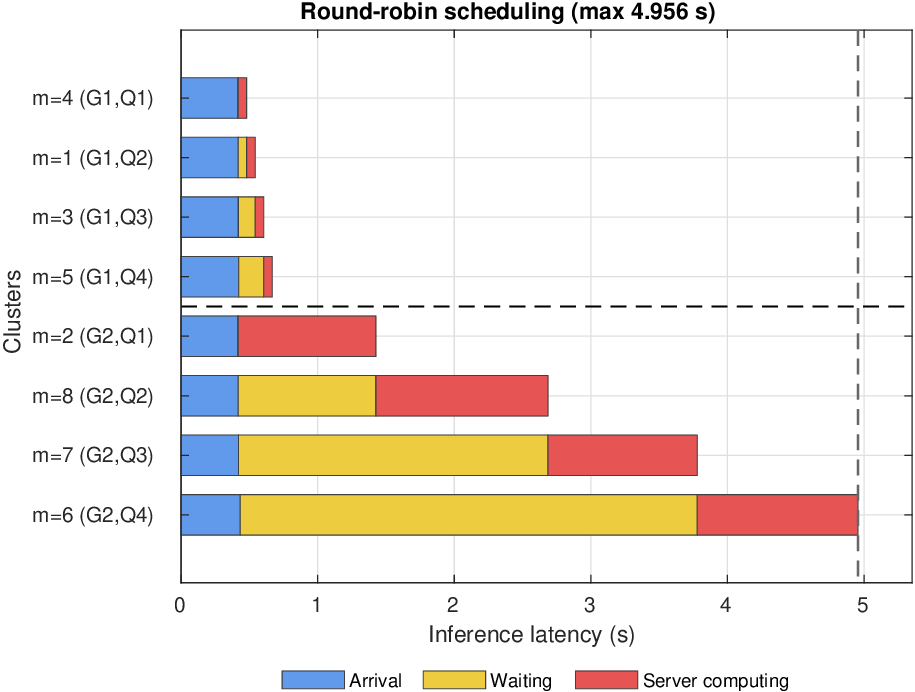}
        \label{fig:vit_delay_timeline_round_robin}}
    \hfill
    \subfigure[Random scheduling]{%
        \includegraphics[width=.32\textwidth]{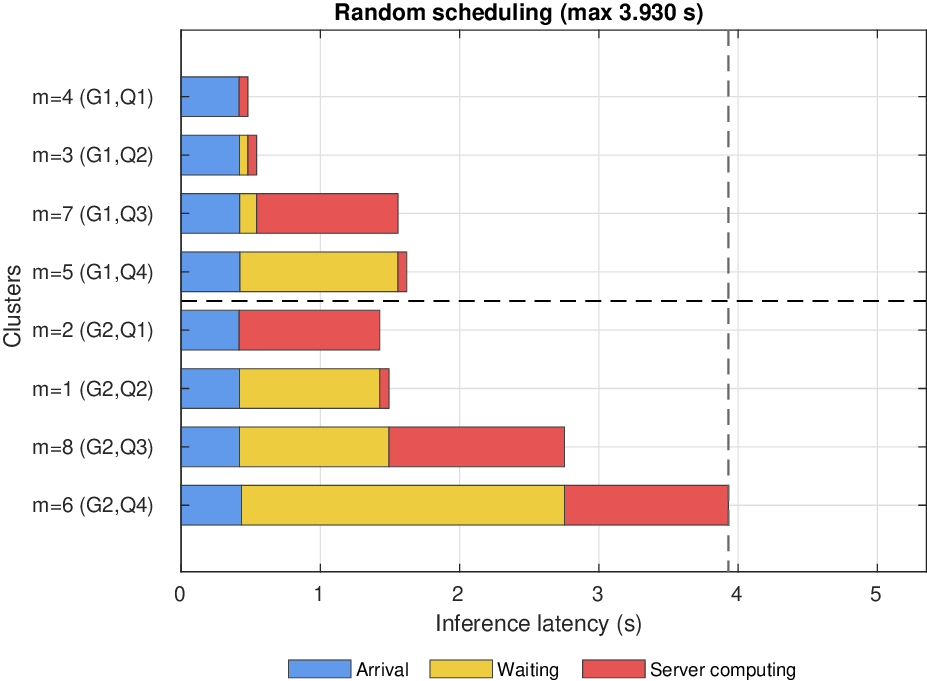}
        \label{fig:vit_delay_timeline_random}}
    \caption{Task-level inference latency decomposition for the CIFAR-10 image-classification task under the server-workload-heterogeneous latency stress setting. Each horizontal bar represents one cluster and is decomposed into arrival, waiting, and server computing intervals. The three panels compare the proposed design, round-robin scheduling, and random scheduling, respectively. The dashed vertical line denotes the maximum cluster completion time in each panel.}
    \label{fig:vit_delay_timeline}
    \vspace{15pt}
\end{figure*}

\begin{figure*}[t]
    \centering
    \subfigure[Proposed design]{%
        \includegraphics[width=.32\textwidth]{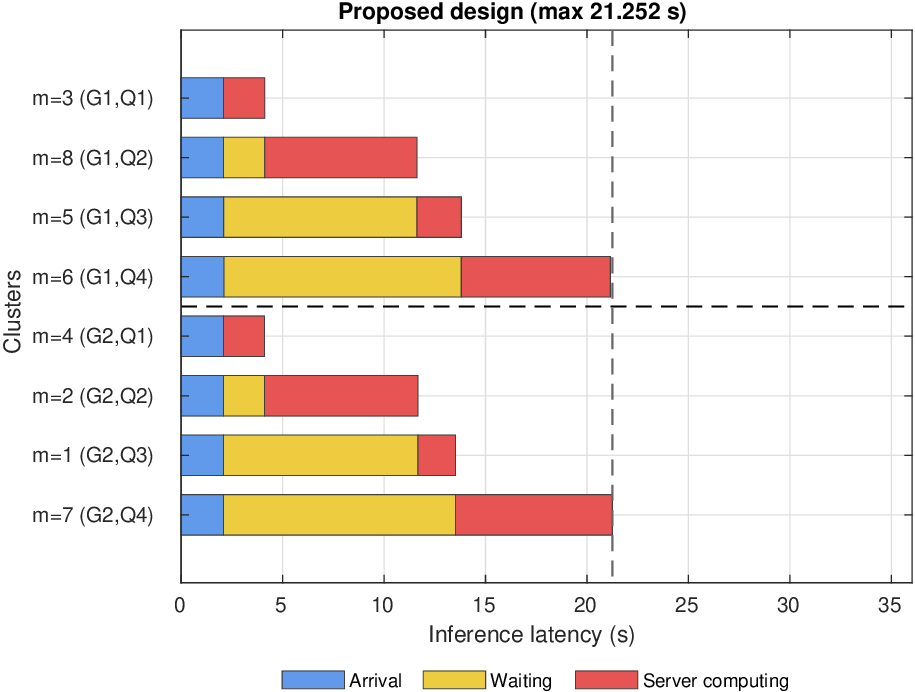}
        \label{fig:video_delay_timeline_proposed}}
    \hfill
    \subfigure[Round-robin scheduling]{%
        \includegraphics[width=.32\textwidth]{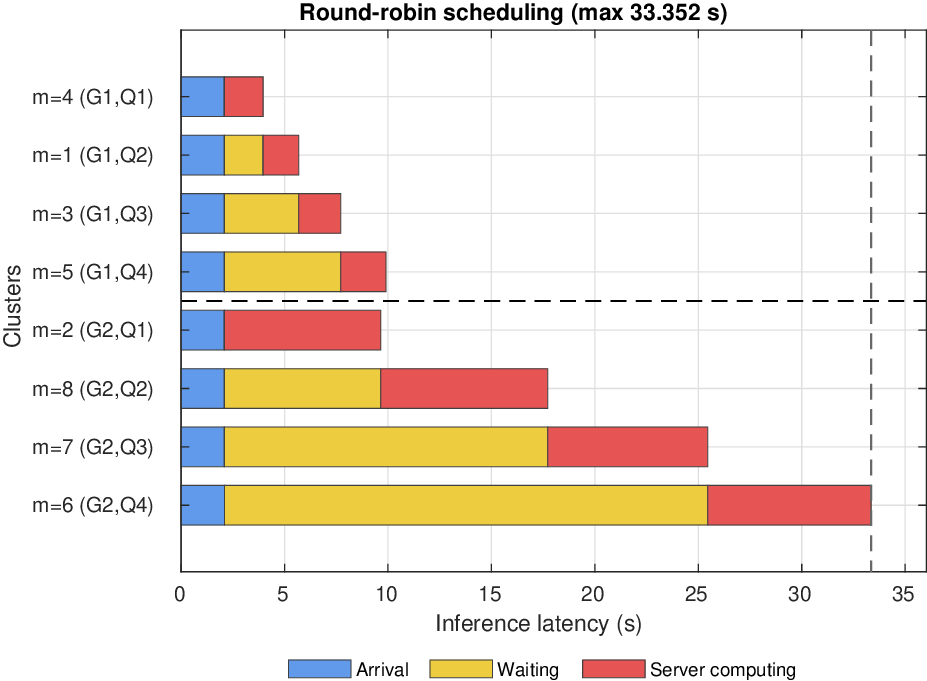}
        \label{fig:video_delay_timeline_round_robin}}
    \hfill
    \subfigure[Random scheduling]{%
        \includegraphics[width=.32\textwidth]{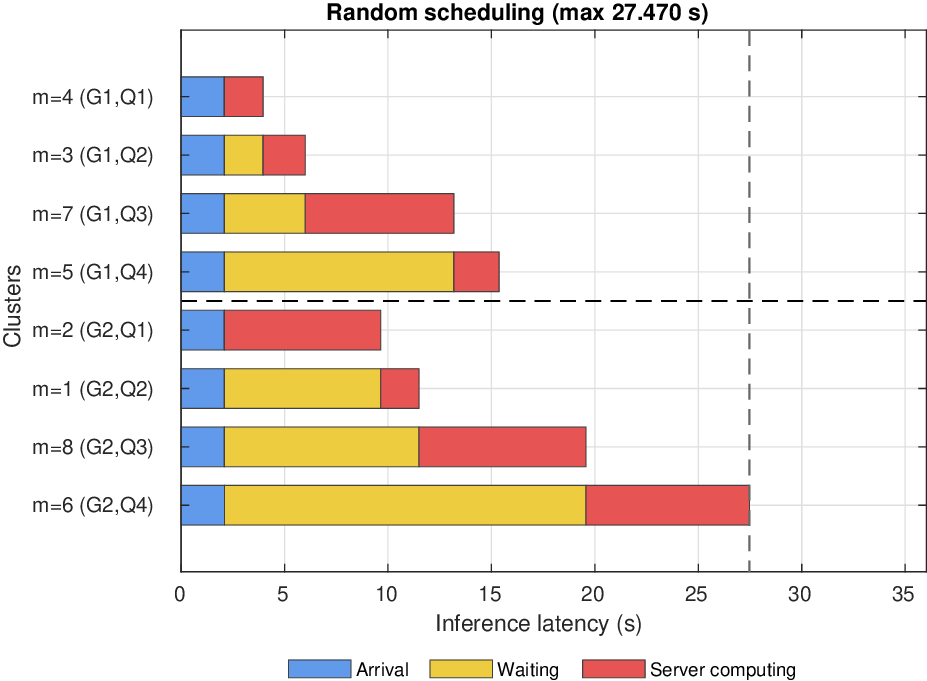}
        \label{fig:video_delay_timeline_random}}
    \caption{Task-level inference latency decomposition for the video-to-text captioning task under the server-workload-heterogeneous latency stress setting. Each horizontal bar represents one cluster and is decomposed into arrival, waiting, and server computing intervals. The three panels compare the proposed design, round-robin scheduling, and random scheduling, respectively. The dashed vertical line denotes the maximum cluster completion time in each panel.}
    \label{fig:video_delay_timeline}
    \vspace{10pt}
\end{figure*}

\begin{figure*}[t]
    \centering
    \subfigure[Cluster $m=3$]{%
        \includegraphics[width=.235\textwidth]{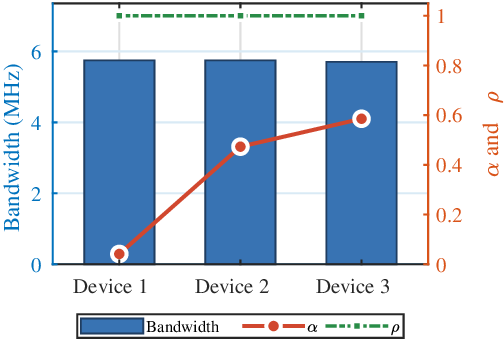}        \label{fig:m3_bandwidth_pruning_insight}}
    \hfill
    \subfigure[Cluster $m=4$]{%
        \includegraphics[width=.235\textwidth]{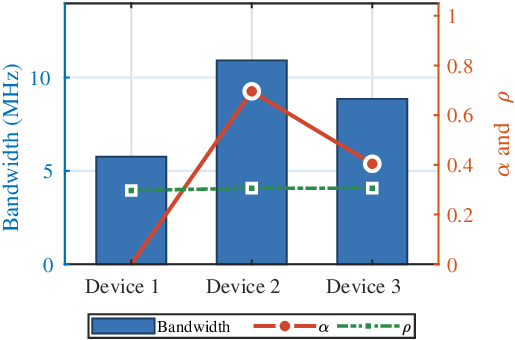}        \label{fig:m4_bandwidth_pruning_insight}}
    \hfill
    \subfigure[Cluster $m=5$]{%
        \includegraphics[width=.235\textwidth]{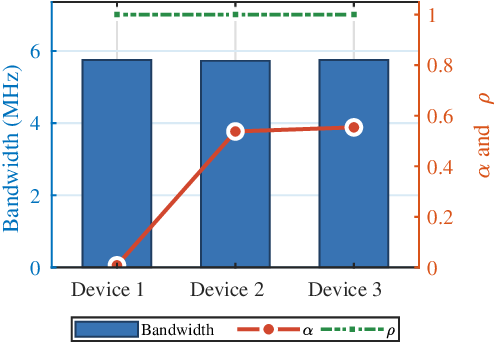}
      \label{fig:m5_bandwidth_pruning_insight}}
    \hfill
    \subfigure[Cluster $m=8$]{%
        \includegraphics[width=.235\textwidth]{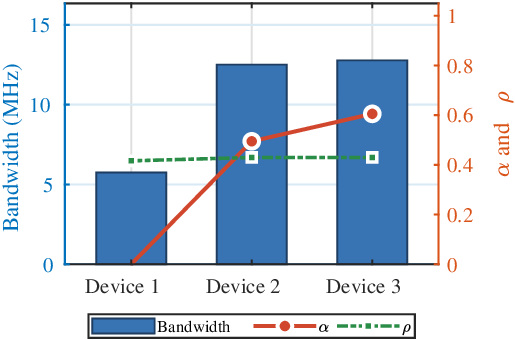}    \label{fig:m8_bandwidth_pruning_insight}}
    \caption{Device importance coefficients, retained-model ratios, and bandwidth allocations for representative video-to-text clusters. Clusters $m=3$ and $m=5$ operate near the no-pruning boundary, whereas clusters $m=4$ and $m=8$ illustrate coupled importance-aware resource allocation in the interior pruning regime.}
    \label{fig:bandwidth_pruning_insight}
    \vspace{10pt}
\end{figure*}

Fig.~\ref{fig:vit_accuracy} shows the classification accuracy under different energy and inference latency thresholds.
The proposed design maintains the best overall accuracy trend across the plotted operating points.
As the energy or inference latency threshold is relaxed, all schemes generally benefit from less aggressive pruning and more feasible communication/computation allocation, but the proposed design makes more effective use of this additional resource slack. This is because the optimization simultaneously adjusts the pruning ratios, GPU scheduling, transmit powers, and bandwidth allocation, rather than fixing one dimension of the system configuration. In contrast, the fixed-power and fixed-bandwidth baselines lose flexibility in communication resource adaptation, while the round-robin and random scheduling baselines cannot fully exploit workload and GPU heterogeneity. The comparison with the equal-importance baseline further shows that modeling heterogeneous view contributions is useful for multi-view image classification, since more informative views can be preserved with higher priority under stringent resource constraints.

Figs.~\ref{fig:video_bleu} and~\ref{fig:video_cider} report the video-to-text captioning performance.
Compared with classification accuracy, the captioning curves are less strictly monotonic because BLEU-4 and CIDEr depend on discrete generated tokens and are more sensitive to sample-level linguistic variations.
Nevertheless, the proposed design generally provides a better captioning quality/resource tradeoff than the fixed power, fixed bandwidth, round-robin scheduling, and random scheduling baselines. The improvement is reflected in both BLEU-4 and CIDEr, suggesting that the optimized resource allocation helps preserve not only local $n$-gram matching but also reference-consensus information in the generated captions.
The gap over the equal-importance baseline is more moderate, which is reasonable because both schemes use the same three-view input structure and the downstream caption generator can partially compensate for small differences in visual feature quality. Even in this case, incorporating heterogeneous importance remains beneficial because it guides pruning decisions toward preserving views with stronger unique or synergistic contributions.

Figs.~\ref{fig:vit_delay_timeline} and~\ref{fig:video_delay_timeline} further visualize the task-level inference latency composition for the server-workload-heterogeneous stress setting.
In these figures, the arrival interval corresponds to the time required for on-device inference and feature uploading, the waiting interval corresponds to GPU queueing before server execution, and the server computing interval corresponds to the execution time on the assigned GPU. For the CIFAR-10 task, the proposed design keeps the maximum completion time below the latency threshold, whereas round-robin scheduling and random scheduling exceed the threshold because heavy server workloads are not balanced across GPU queues. For the video-to-text task, the advantage is more pronounced: the proposed design distributes the server-intensive captioning requests across the two GPUs, while the two heuristic schedules accumulate longer waiting intervals on the bottleneck queue. These timeline results complement the threshold-sweep curves by showing how the optimized scheduling variables affect the internal latency composition of multi-cluster co-inference, especially when the server workloads are heterogeneous.
Overall, the results on both task families show that the proposed framework is not tied to a single evaluation metric, and that joint pruning-aware communication and GPU-scheduling design can improve inference quality under resource constraints.

Fig.~\ref{fig:bandwidth_pruning_insight} compares the optimized importance coefficients, pruning ratios, and bandwidth allocations across four representative video clusters. For clusters $m=3$ and $m=5$, the retained-model ratios $\rho_{m,k}$ are close to one for all devices, and the corresponding bandwidth allocations are nearly uniform within each cluster despite differences in importance coefficients. In contrast, clusters $m=4$ and $m=8$ exhibit lower, interior retained-model ratios and more differentiated bandwidth allocations. In particular, Devices~2 and~3, whose importance coefficients are substantially higher than that of Device~1, receive larger bandwidth shares, whereas Device~1 receives a smaller allocation due to its nearly negligible semantic contribution. These trends reveal that the pruning point determines whether semantic importance can effectively influence bandwidth allocation. When $\rho_{m,k}$ approaches one, as in clusters $m=3$ and $m=5$, model retention is nearly saturated. A higher importance coefficient can no longer induce a meaningful increase in the retained-model ratio, weakening the principal pathway through which semantic importance affects the joint optimization. Bandwidth allocation is therefore governed mainly by communication-related factors, such as upload-delay requirements, channel conditions, transmit power, and cluster-arrival constraints. By contrast, when the pruning ratios lie in the interior region, as in clusters $m=4$ and $m=8$, they remain responsive to semantic importance. The coupling among importance coefficients, model pruning, local computation, upload delay, and the shared bandwidth constraint becomes more active, allowing important devices to receive preferential bandwidth. The key insight is that the importance does not universally translate into additional bandwidth; instead, the pruning point acts as a gate that activates or saturates its influence, while the final allocation results from its joint interaction with communication conditions.

\section{Conclusions}
In this paper, we proposed a pruning-aware multi-cluster LAIM co-inference framework for edge perception networks. The proposed architecture explicitly captures the coupling among communication, computation, and task scheduling arising from multiple device clusters competing for shared wireless spectrum and GPU resources. To characterize the impact of model compression on collaborative inference, we established a rate-distortion-based analytical framework that reveals the trade-off between model pruning and inference performance while incorporating heterogeneous device contributions through PID. Based on this analysis, we formulated a joint inference distortion minimization problem and developed a low-complexity iterative algorithm to jointly optimize model pruning ratios, task scheduling, bandwidth allocation, and transmission power under latency and energy constraints. Extensive simulations on image classification and video-to-text captioning tasks demonstrated that the proposed framework consistently outperforms benchmark schemes, achieving superior inference quality and resource efficiency.
Several directions deserve further investigation. First, the current framework assumes static device clustering and inference requests. Extending the proposed design to dynamic edge environments with user mobility, time-varying network conditions, and changing workloads is an important topic for future research. Second, although this work focuses on pruning-aware co-inference, integrating other model compression techniques, such as quantization and knowledge distillation, may further improve deployment efficiency for resource-constrained edge devices. Finally, adaptive multi-agent collaboration with heterogeneous foundation models is another promising research direction. In such systems, model selection, feature fusion, and resource allocation can be jointly optimized to better coordinate multiple agents with different capabilities, thereby supporting more complex embodied AI and edge perception applications.


\bibliography{AirCompforFL}
\bibliographystyle{IEEEtran}

\end{document}